\theoremstyle{plain}
\newtheorem{proposition}{Proposition}[section]
\theoremstyle{definition}
\newtheorem{remarkinner}{Remark}[proposition]
\newenvironment{remark}
{
\begin{remarkinner}}
{\end{remarkinner}}
\def\eqref#1{equation~\ref{#1}}
\def\1{\bm{1}}
\def\vh{{\bm{h}}}
\def\vm{{\bm{m}}}
\def\vr{{\bm{r}}}
\def\vt{{\bm{t}}}
\def\vv{{\bm{v}}}
\def\vx{{\bm{x}}}
\def\mR{{\bm{R}}}
\DeclareMathAlphabet{\mathsfit}{\encodingdefault}{\sfdefault}{m}{sl}
\SetMathAlphabet{\mathsfit}{bold}{\encodingdefault}{\sfdefault}{bx}{n}
\newcommand{\valpm}[2]{$#1${\scalebox{0.7}{$\pm #2$}}}
\DeclareSIUnit[number-unit-product = {\,}]
\DeclareSIUnit
\begin{document}

\twocolumn[

\title{Split-Flows: Measure Transport and Information Loss Across Molecular Resolutions}

\author{ Sander Hummerich \And  Tristan Bereau \And Ullrich Köthe }
\runningauthor{Sander Hummerich, Tristan Bereau, Ullrich Köthe}

\address{ Institute for Theoretical Physics \\ Heidelberg University \And Institute for Theoretical Physics \\ Heidelberg University \And Computer Vision and Learning Lab \\ Heidelberg University }
]

\begin{abstract}
By reducing resolution, coarse-grained models greatly accelerate molecular simulations, unlocking access to long-timescale phenomena, though at the expense of microscopic information. Recovering this fine-grained detail is essential for tasks that depend on atomistic accuracy, making backmapping a central challenge in molecular modeling. We introduce split-flows, a novel flow-based approach that reinterprets backmapping as a continuous-time measure transport across resolutions. Unlike existing generative strategies, split-flows establish a direct probabilistic link between resolutions, enabling expressive conditional sampling of atomistic structures and—for the first time—a tractable route to computing mapping entropies, an information-theoretic measure of the irreducible detail lost in coarse-graining. We demonstrate these capabilities on diverse molecular systems, including chignolin, a lipid bilayer, and alanine dipeptide, highlighting split-flows as a principled framework for accurate backmapping and systematic evaluation of coarse-grained models. Our code is available at \url{https://github.com/BereauLab/split-flows}.

\end{abstract}

\section{INTRODUCTION}
\begin{figure}[ht]
    \centering
    \includegraphics[width=1\linewidth]{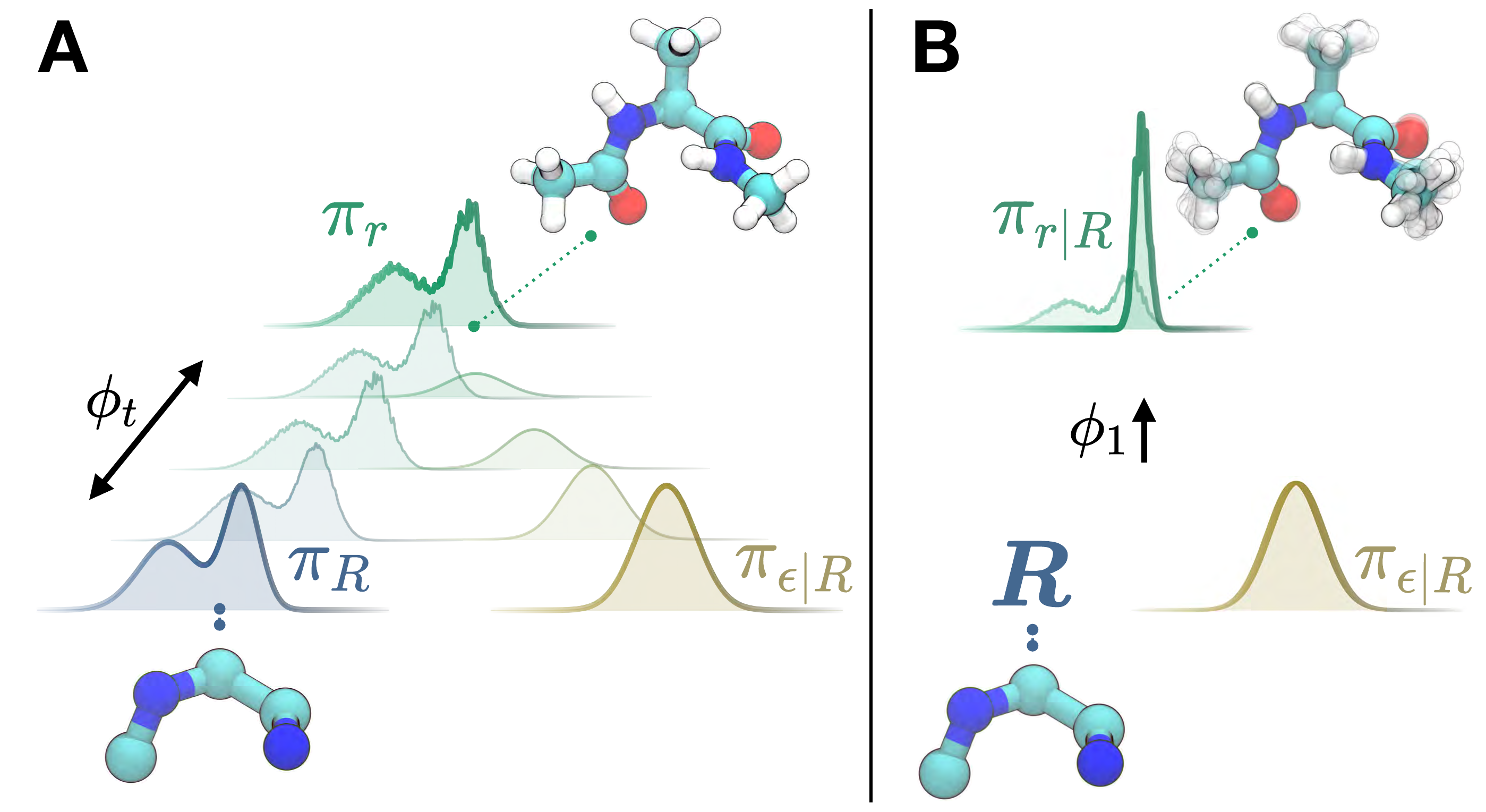}
    \caption{(A) Split-flows connect fine- and coarse-grained densities, $\pi_r$ and $\pi_R$, respectively, at different molecular resolutions via a continuous-time measure transport that maps the excess degrees of freedom of the fine-grained resolution to a simple noise distribution, $\pi_{\epsilon \mid R}$. (B) This enables sampling from the conditional density $\pi_{r \mid R}$, i.e., generative backmapping, and quantifies the information loss inherent in the coarse-grained representation.}
    \label{fig:figure-1}
\end{figure}

Coarse-grained models play a central role in molecular and material simulations \cite{Noid2023}. By marginalizing out unnecessary detail, they drastically reduce the computational cost of simulation and smooth out the underlying energy landscape. This enables simulations on length and time scales that are otherwise intractable in fine-grained models, providing an efficient tool to study slow collective dynamics and mesoscale phenomena such as protein folding, polymer conformational transitions, membrane remodeling, lipid-domain organization, and large-scale self-assembly.

A coarse-graining map implicitly defines an ill-posed inverse problem referred to as backmapping; that is, to reconstruct the marginalized degrees of freedom of the fine-grained model from the coarse-grained representation. As the forward process defines a many-to-one map—many detailed configurations are mapped to the same coarse-grained configuration—the reverse process can be cast as a generative-modeling problem: learning a probabilistic model for the distribution of fine-grained configurations corresponding to each coarse-grained representative.

The reduction of degrees of freedom in coarse-grained models inevitably leads to information loss relative to the fine-grained descreiption. This loss can be quantified through the concept of mapping entropy \cite{shell2008entropy, Foley2015}, which measures the average entropy of the distribution of fine-grained configurations that map to a given coarse-grained representative. Mapping entropy thus provides an information-theoretic lens on multiscale modeling, where a low mapping entropy corresponds to high information loss due to reduced resolution. This perspective enables quantitative assessment of the information loss in coarse-grained models and can ultimately inform model and simulation design.

In this work, we propose \textit{split-flows}---a novel flow-based model that provides a clear approach to bridging the dimensional gap between fine- and coarse-grained domains, as illustrated in Figure~\ref{fig:figure-1}. Split-flows define a continuous-time measure transport across dimensions, enabling us to connect the configurational densities at two different resolutions for general coarse-graining strategies. In addition to addressing the backmapping problem, this probabilistic link between fine- and coarse-grained resolutions allows us to compute the information loss of the coarse-graining map. In summary, we make the following contributions:
\begin{itemize}
    \item \textbf{\textit{Method:}} We introduce split-flows, a flow-based model that enables continuous-time transport of probability measures across different resolutions, bridging fine- and coarse-grained domains. 
    
    \item \textbf{\textit{Theory:}} We show that split-flows allow, for the first time, tractable and general computation of mapping entropy for arbitrary coarse-graining maps, providing a principled measure of information loss. 
    
    \item \textbf{\textit{Applications:}} We apply split-flows to diverse biomolecular systems—chignolin, a lipid bilayer, and alanine dipeptide—demonstrating accurate backmapping and their utility for information-theoretic assessment of coarse-grained models.
\end{itemize}

\section{RELATED WORK}
Solving the inverse problem of backmapping is a central challenge in multiscale molecular modeling \cite{Peter2009}. Mirroring trends across many scientific domains, data-driven methods increasingly replace traditional handcrafted algorithms, such as those by \cite{Rzepiela2010}, and \cite{Wassenaar2014}, which predict approximate fine-grained configurations from coarse inputs, followed by costly refinement.
Early approaches, such as \cite{Stieffenhofer2020}, \cite{Li2020}, and \cite{Wang2022}, leverage generative adversarial networks and variational autoencoders to generate fine-grained samples, without the need for post hoc refinement. \cite{Shmilovich2022} extend this line of work by incorporating information along reconstructed simulations to ensure temporal consistency.
More recent methods by \cite{Jones2023}, \cite{ferguson2025flowback, Berlaga2025}, and \cite{UgarteLaTorre2025} adopt multi-step samplers, i.e., continuous normalizing flows and diffusion models, enabling generalization to unseen structures through residue-wise processing and transferable coarse-graining schemes. While these models emphasize energetic plausibility, transferability, or dynamical consistency, they do not establish a probabilistic link between resolutions and therefore miss key statistical properties of the coarse-graining map. Our method addresses this limitation.

Normalizing flows, introduced by \cite{rezende2015flows} in discrete form, map complex data distributions to simple latents. The continuous-time formulation of \cite{Chen2018} improves expressiveness but initially lacks a tractable training procedure. Flow matching \cite{Lipman2023} resolves this by replacing maximum likelihood with a quadratic regression objective for the underlying velocity field, enabling efficient training of continuous normalizing flows. Modern formulations of flow matching, particularly those by \cite{Albergo2023} and \cite{Tong2024}, generalize normalizing flows to define a measure transport between arbitrary pairs of distributions. Most similar to our approach, \cite{Albergo2023data} apply this framework to image super-resolution and in-painting. We build on this modern interpretation of continuous normalizing flows to connect molecular configurations across resolutions. 

Mapping entropy, introduced by \cite{shell2008entropy} and subsequently formalized by \cite{Rudzinski2011,Foley2015}, quantifies the information lost when a fine-grained system is represented at reduced resolution. As such, it provides a principled criterion for analyzing coarse-grained representations. Prior work has used mapping entropy to characterize the entropic structure of coarse-grained models \cite{Kidder2021}, to identify informative mappings for specific systems such as actin \cite{Kidder2024}, and to study the extent to which structural reduction preserves dynamical behavior across resolutions \cite{Armstrong2012,Jin2023}. It has also been used to disentangle entropic and energetic contributions to collective variables \cite{Mussi2025}, implemented in practical coarse-graining software frameworks \cite{Giulini2020,Giulini2024}, and applied beyond molecular modeling to settings such as spin systems and low-dimensional financial descriptors \cite{Holtzman2022}. Despite this breadth of applications, existing approaches are often tailored to particular models or system classes. By contrast, split-flows provide a general and rigorous framework for computing mapping entropies across a broad class of systems and reduction strategies.

\section{PRELIMINARIES}
\subsection{Thermodynamic Framework}
\textbf{\textit{Notation:}} We use lowercase variable names to denote quantities at the fine-grained resolution and uppercase variable names for quantities at the coarse-grained resolution. For ease of presentation, we treat variables as unitless quantities. 

We consider a system with $n$ degrees of freedom at temperature $T$ and configurations denoted by $\vr$. At equilibrium, these configurations follow the Boltzmann distribution governed by the potential energy function $u : \mathbb{R}^n \to \mathbb{R}$:  
\begin{equation} \label{eq:boltzmann-distribution}
    \pi_r(\vr) = Z^{-1} \exp \left[- u(\vr)/(k_\mathrm{B} T)\right],
\end{equation}
where $Z = \int_{\mathbb{R}^n} \mathrm{d}\vr \exp[-u(\vr)/(k_\mathrm{B} T)]$ is the normalization constant and $k_\mathrm{B}$ is the Boltzmann constant.  

In practice, samples from $\pi_r$ are generated using trajectory-based methods such as molecular dynamics or Monte Carlo simulations. These methods often suffer from slow convergence at the fine-grained resolution, since the energy landscape is rugged and trajectories can become trapped in local minima. Coarse-grained models accelerate sampling by both reducing the number of degrees of freedom and smoothing out the underlying energy surface \cite{Noid2013}.

\subsection{Coarse-Graining} \label{sec:coarse-graining}
\begin{figure}[ht]
    \centering
    \includegraphics[page=1,width=1\linewidth]{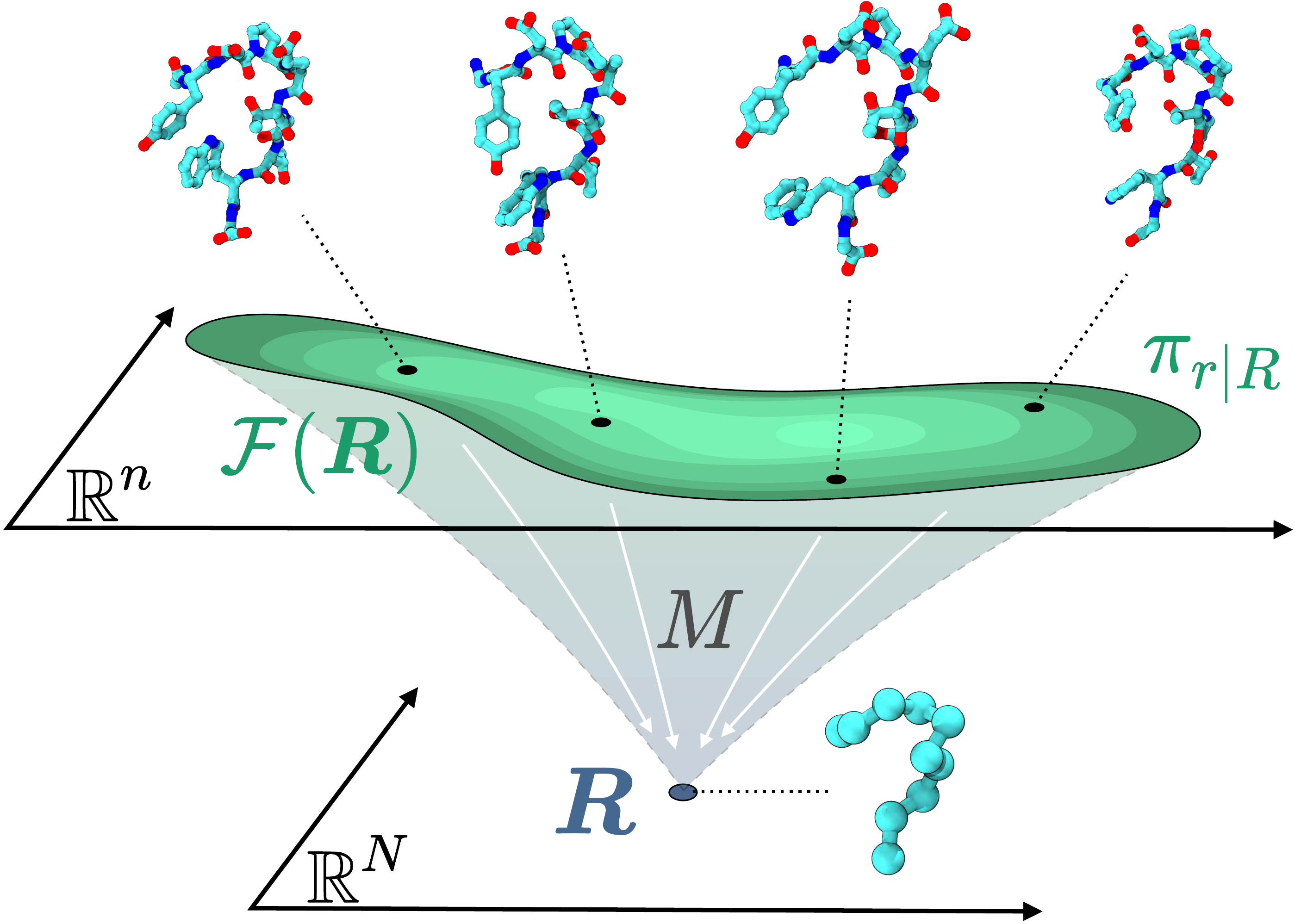}
    \caption{Bottom-up coarse-graining defines a many-to-one mapping operator $M$ that reduces a set $\mathcal{F}(\mR)$ of fine-grained configurations to a single coarse-grained representative $\mR$.}
    \label{fig:fiber-illustration}
\end{figure}
In this work, we focus on so-called bottom-up coarse-graining approaches that derive a coarse representation from a fine-grained model via a coarse-graining map
\begin{equation} \label{eq:linear-cg-map}
    M: \mathbb{R}^n \rightarrow \mathbb{R}^N, \qquad \mR = M(\vr),
\end{equation}
which assigns to each fine-grained configuration $\vr$ a coarse-grained representative $\mR$ with $N$ degrees of freedom, as shown in Figure~\ref{fig:fiber-illustration}. Such maps aim to preserve the essential physics, effectively separating \textit{slow} (typically complex) from \textit{fast} (typically simple) degrees of freedom.

The corresponding coarse-grained density $\pi_R$ is obtained by integrating out the fast degrees of freedom:
\begin{equation} \label{eq:explicit-marginal}
\pi_R(\mR) = \int_{\mathbb{R}^n} \mathrm{d}\vr \, \delta(M(\vr) - \mR) \; \pi_r(\vr),
\end{equation}
where the delta function ensures that only fine-grained configurations consistent with $\mR$ contribute. Since this exact density is generally intractable, coarse-graining methods approximate $\pi_R$ with a model $\hat{\pi}_R$ that ideally preserves consistency with the above equation.

In this work, however, we assume access to the exact coarse-grained distribution. Analogous to the fine-grained Boltzmann distribution, it can be written as
\begin{equation} \label{eq:cg-density-pmf}
\pi_R(\mR) \propto \exp \left[-W(\mR)/(k_\mathrm{B} T)\right].
\end{equation}
Here, the effective potential $W: \mathbb{R}^N \to \mathbb{R}$ is a \textit{free energy},
\begin{equation} \label{eq:pmf-decomposition}
W(\mR) = E(\mR) - T S(\mR),
\end{equation}
which includes energetic and entropic contributions \cite{Noid2013}; $E(\mR)$ is the mean fine-grained energy conditioned on $\mR$, and $S(\mR)$ is an entropic term, that measures the structure of the distribution of compatible fine-grained configurations. Coarse-graining averages over microscopic energies while introducing an entropic bias toward states with many realizations. This results in a smoother free-energy landscape $W(\mR)$ that is easier to sample than the atomistic potential, at the cost of information loss, as different fine-grained configurations mapping to the same $\mR$ become indistinguishable.

\subsection{Information Loss in Coarse-Grained Representations} \label{sec:information-loss}

A quantitative measure of information loss in coarse-grained representations can be derived from the concept of mapping entropy $S_\mathrm{map}$ and its configuration-dependent (\textit{local}) counterpart $S(\mR)$. 

To introduce the mapping entropy, we first define the \textit{fiber} associated with a coarse-grained representative. The fiber is the pre-image of $\mR$ under the mapping $M$, i.e., it is the \textit{lost subensemble} \cite{Kidder2024} of all fine-grained states that map to $\mR$:
\begin{equation} \label{eq:fiber-definition}
    \mathcal{F}(\mR) = \{\vr \in \mathbb{R}^n \;|\; M(\vr) = \mR\}.
\end{equation}
Bayes' theorem gives the \textit{fiber distribution}---the conditional probability of a fine-grained configuration given its coarse-grained representative:
\begin{equation}\label{eq:fiber-bayes}
    \pi_{r \mid R}(\vr \mid \mR) = \frac{\pi_r(\vr)}{\pi_R(\mR)}, \, \forall \, \vr \in \mathcal{F}(\mR).
\end{equation}
We will denote the expectation of some $d$-dimensional observable $O: \mathbb{R}^n \rightarrow \mathbb{R}^d$ on the fine-grained configuration space as the \textit{fiber average}:
\begin{equation} \label{eq:fiber-average}
    \begin{aligned}
        \mathbb{E}_{r \mid R} [O(\vr)] 
        &= \int_{\mathcal{F}(\mR)} \mathrm{d}\vr \; \pi_{r \mid R}(\vr \mid \mR) \, O(\vr) \\
    \end{aligned}
\end{equation}
which lets us evaluate observables over fine-grained states consistent with one particular coarse-grained representative, e.g., the energetic component $E(\mR) = \mathbb{E}_{r \mid R} [u(\vr)]$ in Equation~\ref{eq:pmf-decomposition}.

Using Equation~\ref{eq:fiber-bayes}, we can write the entropy of the fiber distribution as
\begin{equation} \label{eq:conditional-mapping-entropy}
    \begin{aligned}
        S(\mR) &= - k_\mathrm{B} \int_{\mathcal{F}(\mR)} \mathrm{d}\vr \; \pi_{r \mid R}(\vr \mid \mR) \log \pi_{r \mid R}(\vr \mid \mR) \\
        &= - k_\mathrm{B} \,  \mathbb{E}_{r \mid R} \left [ \log \frac{\pi_r(\vr)}{\pi_R(\mR)} \right ],
    \end{aligned}
\end{equation}
which we denote the \textit{local mapping entropy}. As outlined in Appendix~\ref{app:pmf-decomposition}, this is the entropic contribution $S(\mR)$ in Equation~\ref{eq:pmf-decomposition}. For compact domains, e.g., a periodic box, we can define the local excess mapping entropy as the relative entropy of the fiber distribution compared to the best guess we can make without any prior information, i.e., a uniform distribution over $\mathcal{F}(\mR)$:
\begin{equation} \label{eq:excess-conditional-mapping-entropy}
    \begin{aligned}
        S_\mathrm{e}(\mR) 
        & = - k_\mathrm{B} \, \mathbb{E}_{r \mid R} \left [ \log \frac{\pi_{r\mid R}(\vr\mid \mR)}{u_{r \mid R}(\vr \mid \boldsymbol{R})} \right ] \\
    \end{aligned}
\end{equation}
which is the Kullback-Leibler divergence between the fiber distribution $\pi_{r \mid R}$ and a uniform distribution with density \(u_{r \mid R}(\vr \mid \boldsymbol{R}) = \mathrm{Vol}(\mathcal{F}(\mR))^{-1} \; \forall \, \vr \in \mathcal{F}(\mR)\) defined over the fiber.

The local (excess) information loss due to reducing the fiber $\mathcal{F}(\mR)$ to a single representative $\mR$ in the coarse-grained model relates to the local (excess) mapping entropy as
\begin{equation} \label{eq:information-loss}
    I_{(e)}(\mR) = - S_{(e)}(\mR) / k_\mathrm{B}.
\end{equation}
It is evident that the local information loss $I(\mR)$ must be non-negative and thus $S(\mR) \leq 0$. Taking the expectation of the local quantities $S(\mR)$ and $I(\mR)$ with respect to the coarse-grained density $\pi_R$ then yields the global mapping entropy $S_\mathrm{map}$ and information loss $I_\mathrm{map}$ of the coarse-grained model.

\subsection{Two-Sided Flow Matching} \label{sec:two-sided-fm}
Two-sided flow matching aims to connect two non-trivial distributions $\pi_0$ and $\pi_1$ over an interpolation interval $[0, 1]$. Continuous normalizing flows (CNFs) define such a measure transport via the solution to an ordinary differential equation (ODE):
\begin{equation} \label{eq:cnf}
    \begin{aligned}
        \frac{\mathrm{d}}{\mathrm{d} t} \phi_t(\vx_0) =  \vv_t^\theta(\phi_t(\vx_0)), && \phi_0(\vx_0) = \vx_0.
    \end{aligned}
\end{equation}
Here, $\vv^\theta: [0, 1] \times \mathbb{R}^n \rightarrow \mathbb{R}^n$ is a time-dependent velocity field, which is parameterized by a neural network. The flow defines a continuous-time bijection between samples from the two endpoint distributions, $\pi_0$ and $\pi_1$. The pushforward of the initial density $\pi_0$ under the flow $\phi_t$ is given by
\begin{equation} \label{eq:change-of-variables}
   \log \pi_t(\phi_t(\vx_0)) = \log \pi_0(\vx_0) - \int_0^t \mathrm{d}\tau \; \nabla \cdot \vv_\tau ^\theta(\phi_\tau(\vx_0)),
\end{equation}
which defines a probability path between $\pi_0$ and $\pi_1$.

Given a coupling $\pi_{0, 1}$ of samples of two endpoint distributions $\pi_0$ and $\pi_1$, \cite{Albergo2023} propose the following quadratic regression objective:
\begin{equation} \label{eq:conditional-flow-matching-loss}
    \mathcal{L}_v(\theta) = \int _0^1 \mathrm{d}t \; \mathbb{E}_{0, 1}[\|\vv_t^\theta(I_t(\vx_0, \vx_1)) - \partial_t I_t(\vx_0, \vx_1)\|^2],
\end{equation}
which is a simple extension of the conditional flow matching objective, originally introduced by \cite{Lipman2023}. The coupling $\pi_{0, 1}$ defines how the flow should pair samples from the two endpoint distributions and is task-specific, e.g., an optimal transport coupling. It satisfies $\int_{\mathbb{R}^n} \mathrm{d}\vx_1 \; \pi_{0, 1}(\vx_0, \vx_1) = \pi_0(\vx_0)$ and $\int_{\mathbb{R}^n} \mathrm{d}\vx_0 \; \pi_{0, 1}(\vx_0, \vx_1) = \pi_1(\vx_1)$.
The interpolant $I_t$ is chosen to be of the form $I_t(\vx_0, \vx_1) = \alpha_t \vx_0 + \beta_t \vx_1$ and obeys the boundary conditions $\alpha_0 = \beta_1 = 1$ and $\alpha_1 = \beta_0 = 0$.

\section{SPLIT-FLOWS} \label{sec:split-flows}
\begin{figure*}[ht] 
    \centering
    \includegraphics[page=1,width=0.8\linewidth]{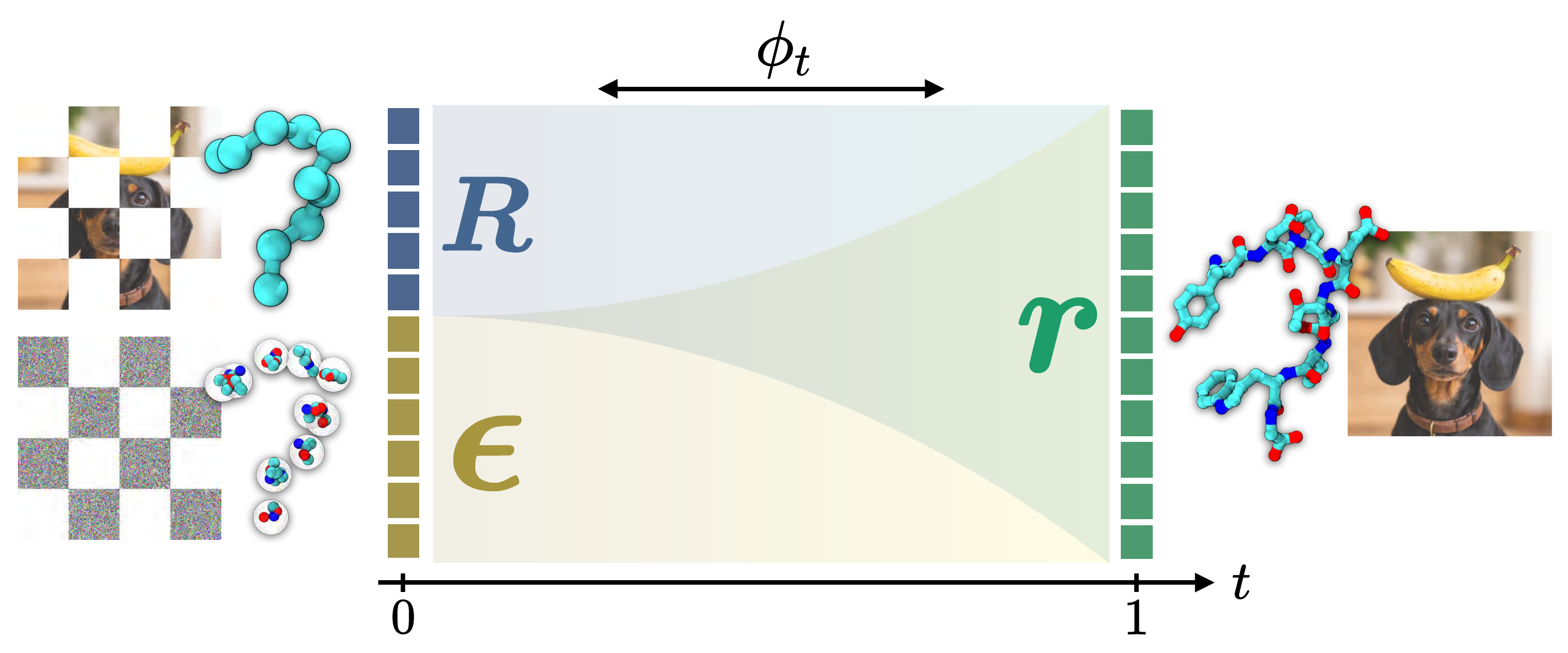}
    \caption{Split-flows define a one-to-one map between configurations of different resolutions. The lower-dimensional samples $\mR$ are augmented with noise $\boldsymbol{\epsilon}$ to resolve the degeneracy induced by the dimensionality gap. The flow $\phi_t$ connects the augmented coarse-grained configurations $(\mR, \boldsymbol{\epsilon}) \sim \pi_R \times \pi_{\epsilon \mid R}$ at $t=0$ with the fine-grained configurations $\vr \sim \pi_r$ at $t=1$. An instructive analogy arises in image inpainting: a partially observed image is augmented with noise dimensions, and the split-flow acts as a probabilistic bridge to a complete, coherent one.} \label{fig:split-flow-illustration}
\end{figure*}

\textbf{\textit{Notation:}} We identify the endpoint densities as $\pi_0 = \pi_R \times \pi_{\epsilon \mid R}$, an augmented coarse-grained density, and $\pi_1 = \pi_r$, the fine-grained density. Correspondingly, $\vx_0 = (\mR, \boldsymbol{\epsilon})$ and $\vx_1 = \vr$. The augmented coarse-grained density $\pi_R \times \pi_{\epsilon \mid R}$ is introduced below.

Split-flows bridge the gap between distributions defined over domains with different dimensionality by augmenting the lower-dimensional space with additional noise dimensions, as illustrated in Figure~\ref{fig:split-flow-illustration}. Given the two endpoint distributions $\pi_R$ and $\pi_r$ defined over $\mathbb{R}^N$ and $\mathbb{R}^n$, respectively, we introduce a simple noise distribution $\pi_{\epsilon \mid R}$ on $\mathbb{R}^{n-N}$ and use a CNF $\phi_t$, trained via the conditional flow matching objective in Equation~\ref{eq:conditional-flow-matching-loss}, to learn a measure transport between $\pi_R \times \pi_{\epsilon \mid R}$ and $\pi_r$:
\begin{equation}
\begin{aligned}
\phi_1: \mathbb{R}^N \times \mathbb{R}^{n-N} \rightarrow \mathbb{R}^n, && (\mR, \boldsymbol{\epsilon}) \mapsto \phi_1(\mR, \boldsymbol{\epsilon}) = \vr.
\end{aligned}
\end{equation}
The noise distribution $\pi_{\epsilon \mid R}$ is chosen such that, given a coarse-grained representation, sampling is tractable, e.g., a Gaussian distribution. Using Equation~\ref{eq:change-of-variables} and the factorization of the augmented endpoint distribution, we can connect the densities $\pi_R$ and $\pi_r$ despite the difference in dimensionality via
\begin{equation} \label{eq:change-of-variables-split}
\begin{aligned}
\log \pi_r(\phi_1(\mR, \boldsymbol{\epsilon})) = & \log \pi_R(\mR) + \log \pi_{\epsilon \mid R}(\boldsymbol{\epsilon} \mid \mR) \\
& - \int_0^1 \mathrm{d}\tau \; \nabla \cdot \vv_\tau^\theta(\phi_\tau(\mR, \boldsymbol{\epsilon})).
\end{aligned}
\end{equation}

In molecular modeling, we use this framework to relate the coarse-grained density $\pi_R$ to the density over fine-grained configurations $\pi_r$. Introducing the conditional noise distribution $\pi_{\epsilon \mid R}$ resolves the many-to-one nature of coarse-graining and allows backmapping to be formulated as transport of measures across resolutions. From a geometric perspective, the flow learns a global coordinate transformation that disentangles the structure of fine-grained configuration space induced by the map $M$, i.e., its decomposition into slow and fast degrees of freedom. A more detailed discussion of this viewpoint is provided in Appendix~\ref{app:geometric-perspective}.

\begin{algorithm} 
    \caption{Per-sample loss computation} \label{alg:split-flow-loss}
    \begin{algorithmic}[1]
        \State \textbf{Input:} fine-grained configuration $\vr$, velocity field $\vv^\theta$,  coarse-graining map $M$, noise distribution $\pi_{\epsilon \mid R}$, interpolant $I_t$
        \State Compute CG representation: $\mR \gets M(\vr)$
        \State Sample noise: $\boldsymbol{\epsilon} \sim \pi_{\epsilon \mid R}$
        \State Sample time: $t \sim u_{[0, 1]}$
        \State Compute loss: 
        \[
            \mathcal{L}(\theta, \vr) \gets\|\vv_t^\theta(I_t(\mR, \boldsymbol{\epsilon}, \vr)) - \partial_t I_t(\mR, \boldsymbol{\epsilon}, \vr)\|^2
        \]
        \State \textbf{Output}: Per-sample loss $\mathcal{L}(\theta, \vr)$
    \end{algorithmic}
\end{algorithm}
To train split-flows in a two-sided manner, as outlined in Section~\ref{sec:two-sided-fm}, we pair samples from the two endpoint distributions using the coarse-graining map $M$, and construct a semi-deterministic coupling between $(\mR, \boldsymbol{\epsilon})$ and $\vr$:
\begin{equation} \label{eq:deterministic-coupling}
\pi_{R, \epsilon, r}(\mR, \boldsymbol{\epsilon}, \vr) = \pi_r(\vr) \, \delta(\mR - M(\vr)) \, \pi_{\epsilon \mid R}(\boldsymbol{\epsilon}\mid \mR).
\end{equation}
This coupling encourages the flow to correctly pair fine-grained configurations with their respective coarse-grained counterparts and provides a straightforward way to evaluate a Monte Carlo estimate of the objective in Equation~\ref{eq:conditional-flow-matching-loss}. We outline the per-sample loss computation in Algorithm~\ref{alg:split-flow-loss}.

This setup, once trained, allows us to easily access the fibers, i.e., the many possible fine-grained configurations mapping to a single coarse-grained representative, and the local mapping entropy of the coarse-graining map. We can generate samples $\vr \mid \mR$ from the conditional distribution $\pi_{r \mid R}$, i.e., samples on the fiber $\mathcal{F}(\mR)$, using Algorithm~\ref{alg:split-flow-fiber-sampling}.
\begin{algorithm} 
    \caption{Fiber-constrained sampling} \label{alg:split-flow-fiber-sampling}
    \begin{algorithmic}[1]
        \State \textbf{Input:} coarse-grained configuration $\mR$, velocity field $\vv^\theta$, noise distribution $\pi_{\epsilon \mid R}$
        \State Sample noise: $\boldsymbol{\epsilon} \sim \pi_{\epsilon \mid R}$
        \State Define: $\vx_0 = \begin{bmatrix} \mR & \boldsymbol{\epsilon} \end{bmatrix}^\top$
        \State Numerically solve Equation~\ref{eq:cnf}: \[
        \vx_1 = \vx_0 + \int_0^1 \mathrm{d}\tau \; \vv^\theta_\tau(\phi_\tau(\vx_0))
        \]
        \State \textbf{Output}: Sample on fiber $\vr = \vx_1 \in \mathcal{F}(\mR)$
    \end{algorithmic}
\end{algorithm}

As outlined in Appendix~\ref{app:fiber-averages}, we can write the fiber average of an observable $O: \mathbb{R}^n \rightarrow \mathbb{R}^d$ using Equation~\ref{eq:fiber-average}, as
\begin{equation} \label{eq:fiber-average-split-flow}
\mathbb{E}_{r \mid R} [O(\vr)] = \mathbb{E}_{\epsilon \mid R} [O(\phi_1(\mR, \boldsymbol{\epsilon}))].
\end{equation}
By combining Equations~\ref{eq:change-of-variables-split} and \ref{eq:fiber-average-split-flow}, we can use split-flows to obtain an estimate of the local mapping entropy in Equation~\ref{eq:conditional-mapping-entropy}:
\begin{equation} \label{eq:local-mapping-entropy-split-flow}
\begin{aligned}
S (\mR) = & - k_\mathrm{B} \, \mathbb{E}_{\epsilon \mid R} \left[ \log \pi_{\epsilon \mid R}(\boldsymbol{\epsilon} \mid \mR) \right] \\
& + k_\mathrm{B} \, \mathbb{E}_{\epsilon \mid R} \left[ \int_0^1 \mathrm{d}\tau \; \nabla \cdot \vv_\tau^\theta(\phi_\tau(\mR, \boldsymbol{\epsilon})) \right],
\end{aligned}
\end{equation}
which requires evaluating the entropy of the noise distribution and the volume change under the flow.

\section{EXPERIMENTS}

In the experimental section, we present split-flows across three molecular systems. First, we consider the mini-protein chignolin, where we validate its backmapping capabilities and quantify the information loss along a coarse-grained molecular dynamics (MD) trajectory. Next, we investigate information loss in a coarse-grained representation of a two-particle solute dragged through a lipid membrane. Finally, we present the information loss landscape in the Ramachandran representation of alanine dipeptide.

\subsection{Chignolin} \label{sec:chignolin}
We apply split-flows to chignolin, a protein composed of 10 amino acids and 77 heavy atoms, which, despite its manageable size, already exhibits folding behavior. The coarse-graining map reduces the fine-grained configuration to the 10 $C_\alpha$ atoms, a commonly used reduction for proteins, as depicted in Figure~\ref{fig:split-flow-illustration}.

We train split-flows on 50k frames from a $1\unit{\micro\second}$ atomistic MD simulation at $360\ \unit{\kelvin}$, which includes multiple folding and unfolding transitions. Since split-flows operate on Cartesian coordinates, we use the $E(3)$-equivariant graph neural network (GNN) architecture proposed by \cite{Satorras2021}. As the noise distribution $\pi_{\epsilon \mid R}$, we choose a residue-wise Gaussian distribution centered at the position of the respective $C_\alpha$ atom. A detailed description and the hyperparameters for both the simulation and the model are provided in Appendix~\ref{app:chignolin}.

\begin{figure*}[ht]
    \centering
    \includegraphics[width=1\linewidth]{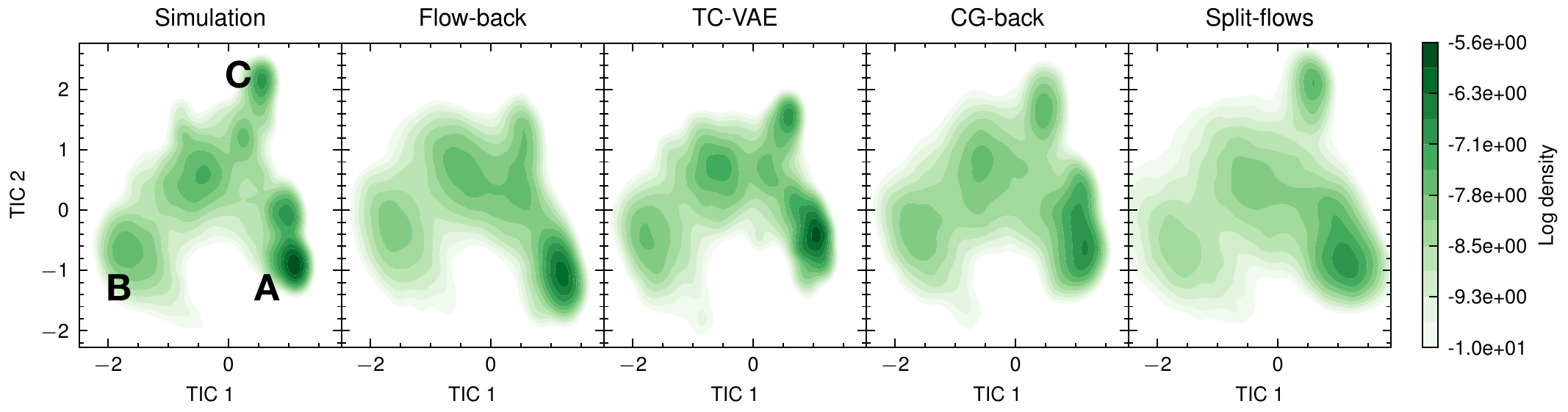}
    \caption{Log densities in the plane of the first two components of TICA. We present the projected log densities of the original simulated configurations as well as backmapped configurations using reference methods and our split-flows. The projection separates the folded (A), unfolded (B), and misfolded (C) modes of chignolin.}
    \label{fig:chignolin-tica-fes}
\end{figure*}
\textbf{\textit{Backmapping:}} First, we validate split-flows by means of backmapping. Given a test set of atomistic and coarse-grained configurations, $\vr$ and $\mR = M(\vr)$, we sample reconstructions $\hat{\vr} = \phi_1(\mR, \boldsymbol{\epsilon})$ with $\boldsymbol{\epsilon} \sim \pi_{\epsilon \mid R}$. We compare split-flows to three methods: TC-VAE \cite{Shmilovich2022}, Flow-back \cite{ferguson2025flowback}, and CG-back \cite{UgarteLaTorre2025}. Flow-back and CG-back transfer to unseen molecules via residue-based backmapping of the $C_\alpha$-representation, so we use the authors’ pretrained models. We retrain TC-VAE with the released code and hyperparameters. 

To evaluate structural fidelity, we project the fine-grained configurations onto the first two components of a time-lagged independent component analysis (TICA) \cite{Prez-Hernndez2013}, a commonly used projection. In Figure~\ref{fig:chignolin-tica-fes}, we compare the resulting log-densities in this two-dimensional representation. We find that split-flows, aside from slight smoothing, reproduce all major modes of the original density—especially the misfolded state, which is typically underrepresented by other methods—indicating high diversity of backmapped samples.

In Table~\ref{tab:chignolin-benchmark}, we report several numerical metrics. We measure energetic plausibility by computing the Wasserstein-1 distance $W_1$ between the distributions of internal energies of the original and reconstructed configurations. We assess the consistency of backmapped configurations with their initial coarse-grained counterparts by calculating the root-mean-squared deviation (RMSD) between the original coarse-grained representation $\mR = M(\vr)$ and the projected backmapped configuration $\hat{\mR} = M(\hat{\vr})$, denoted by $\mathrm{RMSD}_\mathrm{cg}$. To evaluate topological agreement, we construct a molecular graph based on atomic distances and compute the relative graph edit distance $D_\mathcal{G}$ with respect to the true molecular graph. For these three metrics we report mean and standard deviation over five test trajectories, each containing 10k frames.

To measure diversity within a fiber, we generate a set of configurations $\hat{\vr}_i \in \mathcal{F}(M(\vr))$ for a given reference structure $\vr$, and compute the average RMSD between generated configurations and the reference, denoted as $\mathrm{RMSD}_\mathrm{ref}$, as well as the average pairwise RMSD between the generated configurations, denoted as $\mathrm{RMSD}_\mathrm{gen}$. Following \cite{Jones2023}, we define a diversity score $\eta_\mathrm{div}$ as the ratio $\mathrm{RMSD}_\mathrm{gen} / \mathrm{RMSD}_\mathrm{ref}$. This ratio vanishes for deterministic backmapping, where all generated samples are identical, and increases with sample diversity. We report mean and standard deviation over 50 reference configurations, with 1k samples on the fiber per reference.
\begin{table}[ht!]
\caption{We report the Wasserstein-1 distance $\mathrm{W}_1$ of the internal energy distribution in $\unit{\kilo\cal\per\mol}$, the RMSD in the coarse-grained space $\mathrm{RMSD}_\mathrm{cg}$ in $\unit{\pico\meter}$, the relative graph edit distance $D_\mathcal{G}$ in $\%$, and the fiber-diversity score $\eta_\mathrm{div}$. Best values are highlighted in bold, second-best values are underlined.}  
\label{tab:chignolin-benchmark}
\centering
\resizebox{\linewidth}{!}{
\begin{tabular}{lcccc}
\textbf{Model} & $\boldsymbol{W_1 (\downarrow)}$ & $\boldsymbol{\mathrm{RMSD}_\mathrm{cg} (\downarrow)}$ & $\boldsymbol{D_\mathcal{G} (\downarrow)}$ & $\boldsymbol{\eta_\mathrm{div} (\uparrow)}$\\
\hline \\
Flow-back & \underline{\valpm{300}{3}} & \valpm{4.602}{0.008} & \valpm{\mathbf{0.027}}{\mathbf{0.006}} & \valpm{0.60}{0.10}\\
TC-VAE & \valpm{5900}{150} & \valpm{5.0}{0.3} & \valpm{6.2}{0.9} & \valpm{0.022}{0.005} \\
CG-back & \valpm{321}{3} & \valpm{\mathbf{0.071}}{\mathbf{0.007}}  & \valpm{0.53}{0.07} & \valpm{\mathbf{0.90}}{\mathbf{0.09}} \\
Split-flows & \valpm{\mathbf{131.1}}{\mathbf{1.8}} & \underline{\valpm{0.62}{0.04}} & \underline{\valpm{0.22}{0.06}} & \underline{\valpm{0.79}{0.15}}\\
\end{tabular}
}
\end{table}

Across all numerical metrics in Table~\ref{tab:chignolin-benchmark}, we find that split-flows perform competitively compared to existing methods. In particular, their ability to compute highly diverse samples—with a diversity score of $0.79$—that are simultaneously energetically plausible, with a Wasserstein-1 distance of $131.1 \unit{\kilo \cal \per \mol}$, places split-flows in a prominent position in the comparison. Moreover, split-flows rank second in terms of coarse-grained consistency and relative graph edit distance. Nonetheless, we emphasize that Flow-back and CG-back exhibit consistently strong performance, despite their transferability. We note that the low diversity score for TC-VAE results from the model’s coherency with the previous atomistic configuration, which limits generated configurations to remain temporally consistent with their respective predecessors. Furthermore, despite granting it a multiple of the training time used for our method, we find that TC-VAE does not perform as well as presented in the original work. More details on training and the compute budget used can be found in Appendix~\ref{app:chignolin}.

\textbf{\textit{Information loss:}}
Next, we leverage the mapping entropy framework developed in Sections~\ref{sec:information-loss} and \ref{sec:split-flows} to quantify the local information loss of the coarse-grained representation along a MD trajectory. We compute the information loss over a short section of a test trajectory and visualize the resulting sequence in Figure~\ref{fig:information-loss-chignolin}.

\begin{figure}[ht]
    \centering
    \includegraphics[page=1,width=1\linewidth]{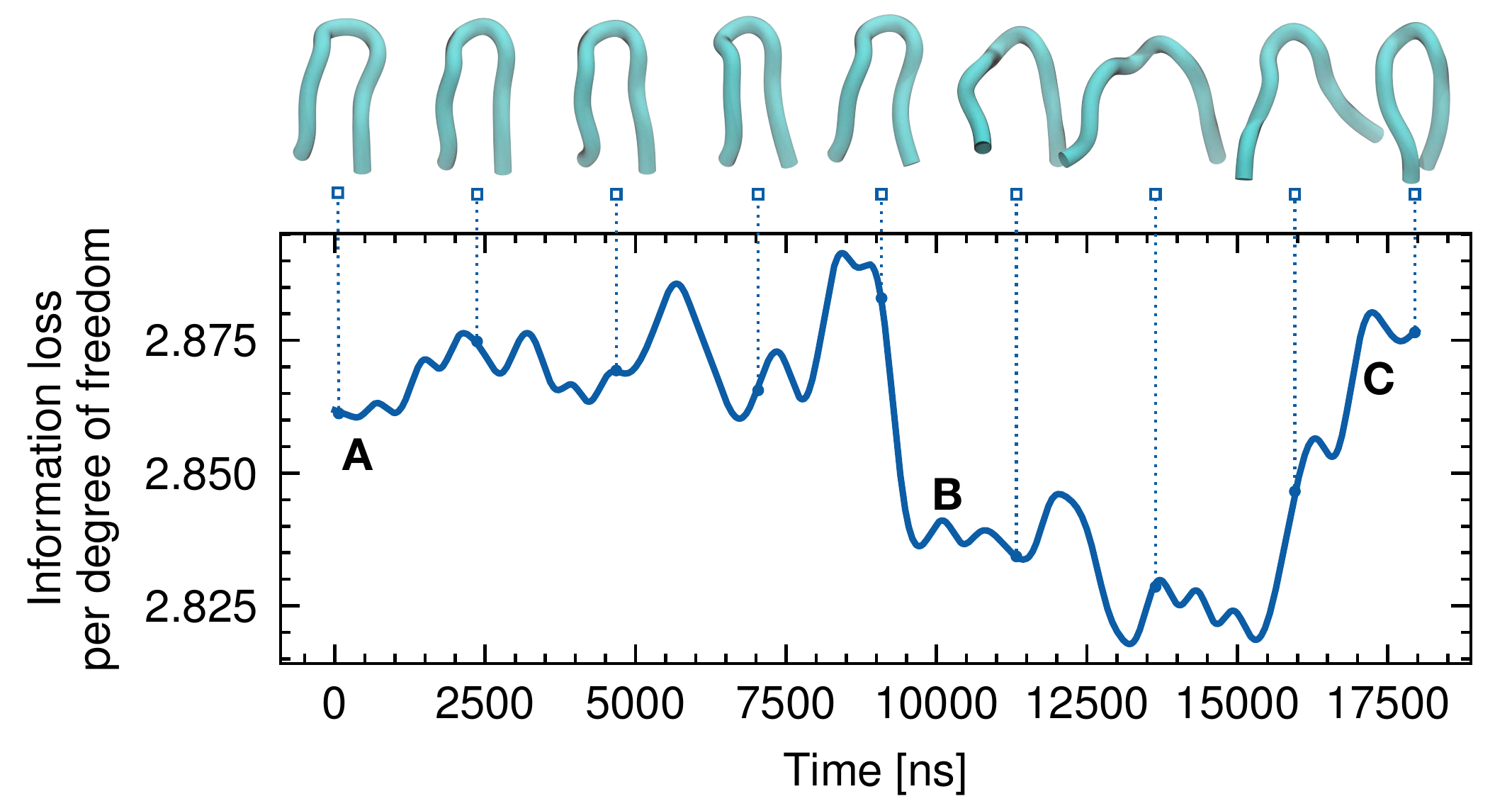}
    \caption{Average information loss per removed degree of freedom in the $C_\alpha$ representation of chignolin along a MD trajectory. We analyze a short section of the simulation starting in a folded state (A), followed by a partial separation of the two strands (B), and returning to the folded state (C). }
    \label{fig:information-loss-chignolin}
\end{figure}

The reduction to the $C_\alpha$ atoms, as depicted in Figure~\ref{fig:split-flow-illustration}, projects out many orthogonal degrees of freedom, particularly in the side chains. The associated removal of interactions leads to a conformation-dependent information loss: We observe a drop in the information loss landscape in the region where the two strands of the protein separate. This partial opening reduces the interactions between the projected-out atoms in the tails, resulting in a less constrained—and therefore less informative—fiber distribution.

\subsection{Solute in a Lipid Bilayer}\label{sec:lipid-bilayer}

Since the solute is approximately a rigid body, its configuration is well-described by a two-dimensional description consisting of the distance $z \in \left[-\tfrac{L}{2}, \tfrac{L}{2}\right]$  of the solute’s center of mass from the membrane center and the relative orientation $\vartheta \in [0, \pi]$ with respect to the $z$-axis, as depicted in Figure~\ref{fig:lipid-bilayer}. We define a coarse-grained description of the solute by projecting out the rotational degree of freedom: $M: \left[-\tfrac{L}{2}, \tfrac{L}{2}\right] \times [0, \pi] \rightarrow \left[-\tfrac{L}{2}, \tfrac{L}{2}\right]$.
We then train a split-flow, parameterized by a multilayer perceptron (MLP), to connect the configurations $\vr = \begin{bmatrix} z & \vartheta \end{bmatrix}^\top \quad \text{and} \quad \mR = \begin{bmatrix} z \end{bmatrix}^\top$ and use a uniform distribution on \( [0, \pi] \), \( \pi_{\epsilon \mid R} = u_{[0, \pi]} \), for the noise dimensions. Periodicity is enforced by a simple sine–cosine input parameterization for the MLP.

\begin{figure}[ht]
    \centering
    \includegraphics[page=1,width=1\linewidth]{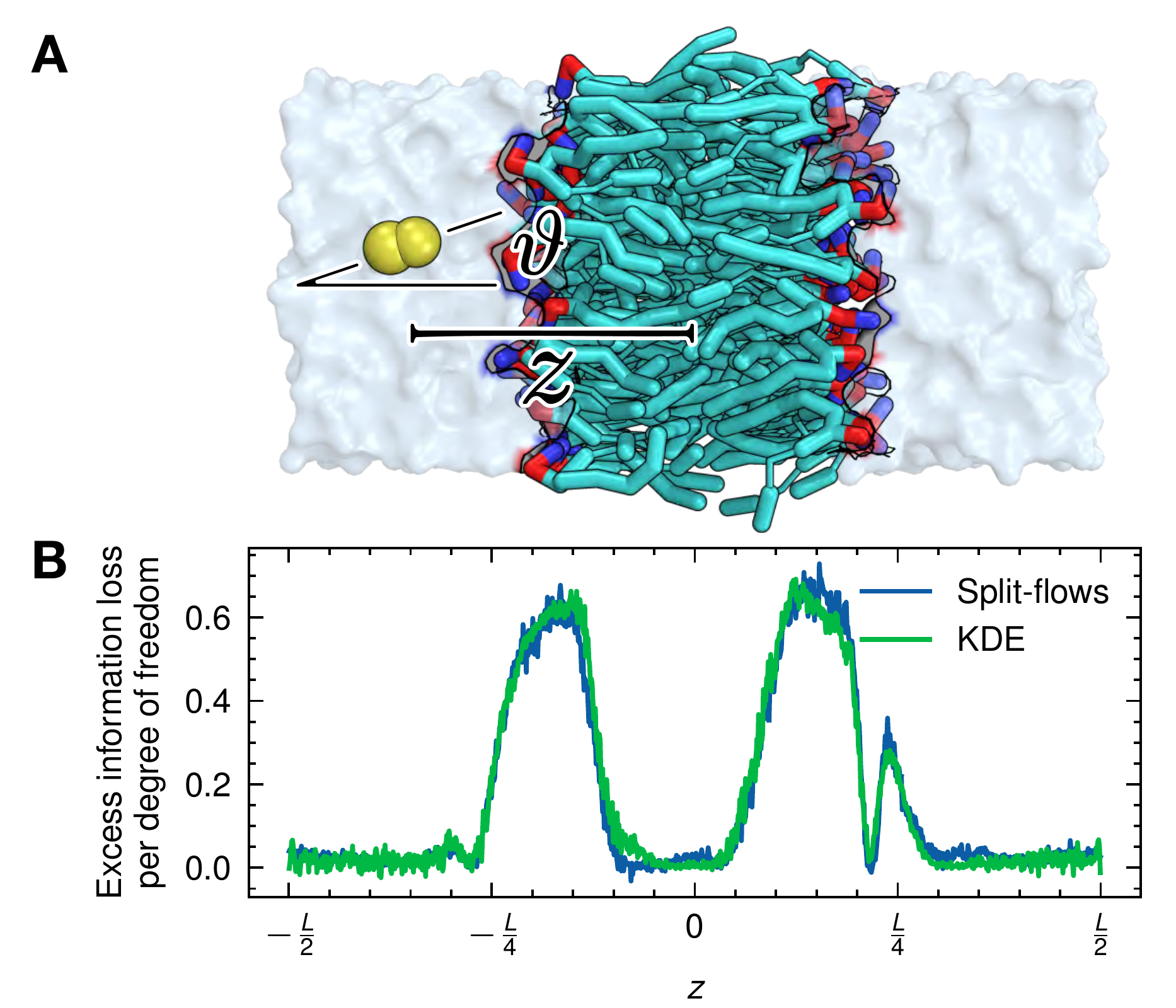}
    \caption{(A) An amphiphilic solute is dragged through a lipid bilayer surrounded by bulk water under a constant driving force. We describe its configuration by the distance $z$ to the membrane center and its relative orientation $\vartheta$ with respect to the $z$-axis.
    (B) Average excess information loss per degree of freedom when removing $\vartheta$, shown as a function of $z$.
    }
    \label{fig:lipid-bilayer}
\end{figure}

As a baseline, we estimate the fine- and coarse-grained densities using a simple kernel density estimator (KDE), yielding a binned approximation of the local information loss; see Appendix~\ref{app:bilayer} for details. Figure~\ref{fig:lipid-bilayer} shows the resulting excess information loss as a function of $z$. The split-flow estimates closely match the KDE baseline in both shape (Pearson correlation $0.99$) and local magnitude (mean absolute error $0.027$) across the coarse-grained domain.

The landscape reflects the amphiphilic interactions between the lipid membrane and the solute, and the associated constraints on the solute's relative orientation. In bulk water, these interactions are weak, and the solute's orientation is largely unconstrained, resulting in vanishing information loss. Near the surface, the hydrophilic headgroups attract the hydrophilic and repel the hydrophobic side of the solute, aligning it with the surface normal and causing a small peak in the information loss. Upon entering the membrane, the solute flips and orients its hydrophobic side toward the membrane interior. This re-orientation strongly constrains the solute, leading to a pronounced maximum in information loss at the interface. In the hydrophobic core, the constraint relaxes as both orientations become nearly equivalent, resulting in a clear decrease in information loss toward the bilayer midplane. This behavior is conceptually mirrored across the midplane, with a quantitative asymmetry due to the solute being pulled through the membrane with constant force.

\subsection{Alanine Dipeptide}
We consider 50k frames from a $1 \unit{\micro\second}$ MD simulation of alanine dipeptide at $600 \unit{\kelvin}$. We train the $E(3)$-equivariant GNN parameterization of split-flows, introduced in Section~\ref{sec:chignolin}, to connect the atomistic configuration with a reduced description in which only the five backbone atoms defining the dihedral angles $(\phi, \psi)$—the Ramachandran angles—are retained. This coarse-graining scheme is illustrated in Figure~\ref{fig:alanine-dipeptide-information-loss}. We provide a detailed description of the simulation and model hyperparameters in Appendix~\ref{app:ala2}. 

\begin{figure}[ht]
    \centering
    \includegraphics[width=1\linewidth]{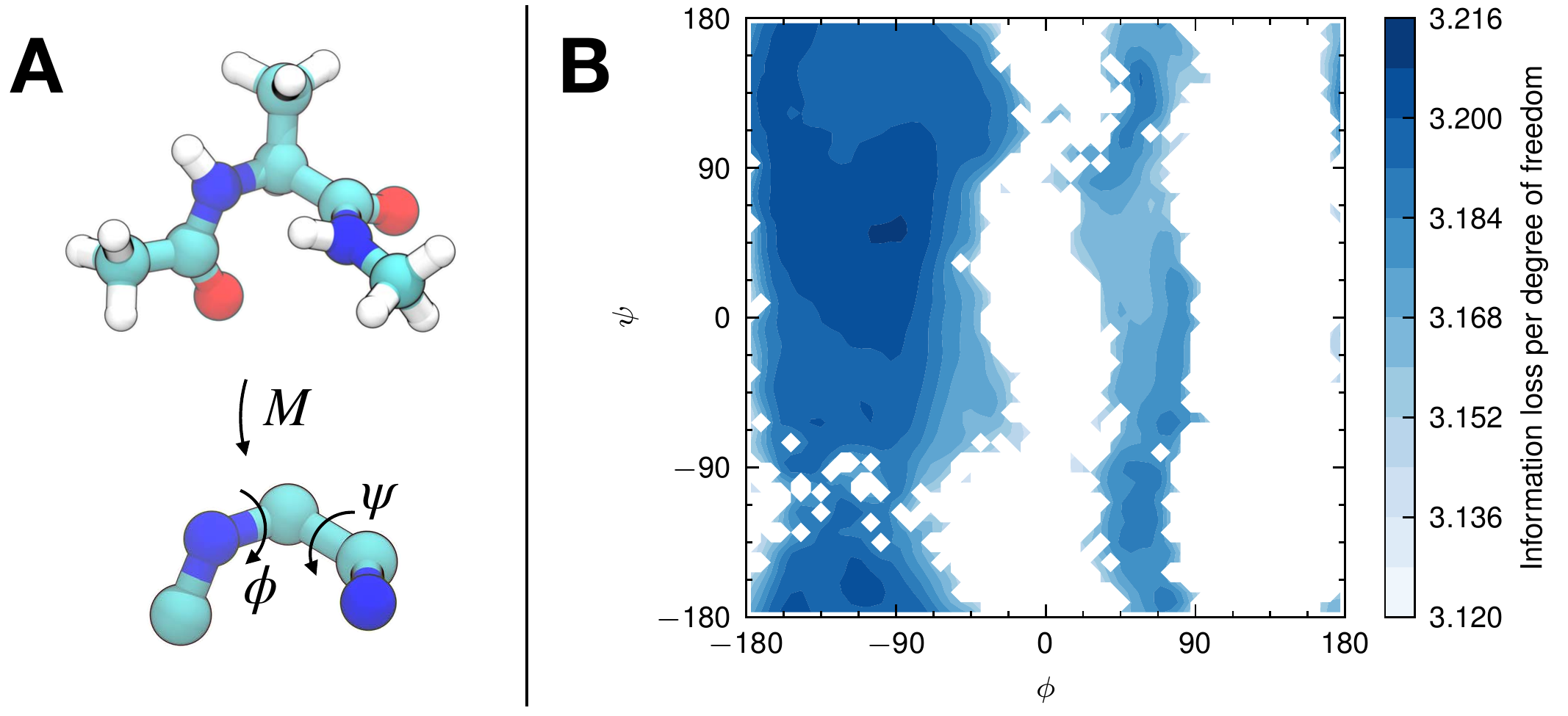}
    \caption{(A) The coarse-graining map reduces the atomistic configuration to the five atoms defining the Ramachandran dihedrals. (B) A landscape of the average information loss per removed degree of freedom is shown in the $(\phi, \psi)$-plane.}
    \label{fig:alanine-dipeptide-information-loss}
\end{figure}

Because the bond lengths and angles within this fragment are nearly rigid, the dihedrals $(\phi, \psi)$ uniquely determine the coarse-grained configuration $\mR$, up to global rigid-body motions corresponding to the Euclidean group $E(3)$. This enables us to visualize the information loss landscape over the two-dimensional $(\phi, \psi)$ plane in Figure~\ref{fig:alanine-dipeptide-information-loss}. This coarse-grained representation produces a complex distribution of lost information across the Ramachandran plane, reflecting the interactions of the eliminated degrees of freedom, including steric repulsions that generate the forbidden regions (white) and dipole–dipole interactions that shape the overall conformational preferences of the dipeptide. It demonstrates our model's ability to resolve highly non-trivial structure in the information loss of coarse-grained representations.

\section{CONCLUSION}
In this paper, we present split-flows, a novel approach for connecting molecular densities at different resolutions. Split-flows perform competitively in backmapping and—leveraging the volume change under the flow—can quantify the local information loss across general biophysical systems.

Our method performs well on various molecular systems. To scale up the method to larger macromolecules, autoregressive techniques---already used in the context of residue-based backmapping methods---may offer an appealing strategy. We propose to explore this direction in future work, where our contribution would provide unique insight in the scaling of resolution-based information loss.

Applications of our method are widespread. In particular, statistical thermodynamics provides a rich set of physical quantities linked to local mapping entropy, including the specific heat of a coarse configuration \cite{Foley2015}. In addition, split-flows offer a principled approach for identifying informative coarse-grained representations of molecular systems—specifically, those with low and uniform information loss. Finally, using split-flows to construct scale transitions in multi-scale molecular simulations represents a promising direction for future work.

\section{ACKNOWLEDGMENTS}
We thank Daniel Nagel for providing the lipid bilayer simulation and for fruitful discussions.
Furthermore, we gratefully acknowledge William Noid for providing extensive and constructive feedback on an earlier preprint of this work.
This work is supported by Deutsche Forschungsgemeinschaft (DFG) under Germany’s Excellence Strategy EXC-2181/1–390900948 (the Heidelberg STRUCTURES Excellence Cluster). 
The authors acknowledge support by the state of Baden-W\"urttemberg through bwHPC and the German Research Foundation (DFG) through grant INST 35/1597-1 FUGG.

\bibliography{literature}

\appendix
\clearpage
\thispagestyle{empty}

\onecolumn
\title{Split-Flows: Measure Transport and Information Loss Across Molecular Resolutions: \\
Supplementary Materials}

\section{ADDITIONAL PROOFS AND THEORETICAL DETAILS}
In this section, we provide additional theoretical details that complement the derivations of the local mapping entropy, information loss, and the split-flow setup in the main text. The propositions and proofs presented here rely heavily on well-established theory on normalizing flows \cite{rezende2015flows, Chen2018, Lipman2023, Albergo2023}. For completeness, we briefly state the core results.

Let $\phi_1: \mathbb{R}^n \rightarrow \mathbb{R}^n$ be a diffeomorphism and $\pi_0$ a probability density on $\mathbb{R}^n$. The density of the pushforward measure $(\phi_1)_\# \; \pi_0$ under the flow then can be written as
\begin{align}
    (\phi_1)_\# \, \pi_0(\vx_0) 
    = \left|\det J_{\phi_1}(\vx_0)\right|^{-1} \, \pi_0(\vx_0)
    = \pi_1(\phi_1(\vx_0)),
\end{align}
where $J_{\phi_1}(\vx_0)$ is the Jacobian matrix of partial derivatives. In case the flow $\phi_1$ is parameterized in continuous time by the ordinary differential equation (ODE):
\begin{equation}
    \tfrac{\mathrm{d}}{\mathrm{d} t} \phi_t(\vx_0) 
    =  \vv_t(\phi_t(\vx_0)), 
    \qquad \phi_0(\vx_0) = \vx_0,
\end{equation}
then the logarithm of the Jacobian determinant evolves according to the ODE:
\begin{equation}
    \tfrac{\mathrm{d}}{\mathrm{d} t} \log \left|\det J_{\phi_t}(\vx_0)\right|
    = \nabla \cdot \vv_t(\phi_t(\vx_0)),
\end{equation}
which yields an integral expression for the total change of volume along the flow:
\begin{equation}
    \log \left|\det J_{\phi_1}(\vx_0)\right|
    = \int_0^1 \mathrm{d}\tau \; \nabla \cdot \vv_\tau(\phi_\tau(\vx_0)).
\end{equation}
Consequently, the pushforward density can be expressed as
\begin{equation}
    \pi_1(\phi_1(\vx_0))
    = \pi_0(\vx_0) \,
    \exp \left[
      -\int_0^1 \mathrm{d}\tau \; \nabla \cdot \vv_\tau(\phi_\tau(\vx_0))
    \right].
\end{equation}

Together, these results formalize how probability densities evolve under smooth transformations and serve as the starting point for our theoretical analysis of mapping entropy, information loss, and split-flows.

\subsection{Decomposition of the Coarse-Grained Potential}
\label{app:pmf-decomposition}

We now show that the coarse-grained potential of mean force admits a natural decomposition into energetic and entropic contributions. \\

\begin{proposition}[Decomposition of the coarse-grained potential]
Let $\vr \in \mathbb{R}^n$ denote a fine-grained configuration, and let $\mR = M(\vr) \in \mathbb{R}^N$ be the associated coarse-grained representative obtained by a measurable coarse-graining map \( M: \mathbb{R}^n \to \mathbb{R}^N \). 
Suppose the fine-grained configurations are Boltzmann-distributed as
\begin{equation}
    \pi_r(\vr) = Z^{-1} \exp[-u(\vr)/(k_\mathrm{B} T)],
\end{equation}
where $u(\vr)$ is the potential energy governing the fine-grained distribution and $Z$ is the partition function. Then the marginal coarse-grained distribution $\pi_R(\mR)$ defined by
\begin{equation}
    \pi_R(\mR) = \int_{\mathbb{R}^n} \mathrm{d}\vr \, \delta(\mR - M(\vr)) \; \pi_r(\vr) 
\end{equation}
can be written in Boltzmann form:
\begin{equation}
    \pi_R(\mR) \propto \exp[-W(\mR)/(k_\mathrm{B} T)],
\end{equation}
where the free energy $W(\mR)$---the potential of mean force (PMF)---admits the decomposition
\begin{equation}\label{eq:app-pmf-decomposition}
    W(\mR) = E(\mR) - T S(\mR),
\end{equation}
with
\begin{equation}\label{eq:app-pmf-terms}
    E(\mR) = \mathbb{E}_{r \mid R}[u(\vr)],
    \qquad
    S(\mR) = -k_\mathrm{B} \, \mathbb{E}_{r \mid R}\!\left[\log \frac{\pi_r(\vr)}{\pi_R(\mR)}\right].
\end{equation}
\end{proposition}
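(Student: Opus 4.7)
The plan is to derive the decomposition by applying a fiber expectation to the Bayes factorization stated in Equation~\ref{eq:fiber-bayes}, so that the result follows from algebra together with the Boltzmann form of $\pi_r$.

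First, I would start from $\pi_r(\vr) = \pi_{r \mid R}(\vr \mid \mR)\, \pi_R(\mR)$ on the fiber $\Omega_R(\mR)$, take logarithms, and substitute $\log \pi_r(\vr) = -u(\vr)/(k_\mathrm{B} T) - \log Z$ to get
\begin{equation*}
-\frac{u(\vr)}{k_\mathrm{B} T} - \log Z = \log \pi_{r \mid R}(\vr \mid \mR) + \log \pi_R(\mR).
\end{equation*}
Applying the fiber average $\mathbb{E}_{r \mid R}[\cdot]$ from Equation~\ref{eq:fiber-average} to both sides, and using that $\log \pi_R(\mR)$ is constant along the fiber, would yield
\begin{equation*}
-\frac{\mathbb{E}_{r \mid R}[u(\vr)]}{k_\mathrm{B} T} - \log Z = \mathbb{E}_{r \mid R}[\log \pi_{r \mid R}(\vr \mid \mR)] + \log \pi_R(\mR).
\end{equation*}

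Next, I would multiply both sides by $-k_\mathrm{B} T$ to obtain
\begin{equation*}
-k_\mathrm{B} T \log \pi_R(\mR) = \mathbb{E}_{r \mid R}[u(\vr)] + k_\mathrm{B} T\, \mathbb{E}_{r \mid R}[\log \pi_{r \mid R}(\vr \mid \mR)] + k_\mathrm{B} T \log Z,
\end{equation*}
and identify $W(\mR)$ with $-k_\mathrm{B} T \log \pi_R(\mR)$ up to the additive constant $k_\mathrm{B} T \log Z$ (together with the normalization from the marginal partition function), which is absorbed by the proportionality sign in the Boltzmann form. By Equation~\ref{eq:fiber-bayes}, $\log \pi_{r \mid R}(\vr \mid \mR) = \log[\pi_r(\vr)/\pi_R(\mR)]$ on the fiber, so the middle term equals $-T S(\mR)$ with $S$ as defined in the statement, and $W(\mR) = E(\mR) - T S(\mR)$ follows.

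The main subtlety is measure-theoretic rather than algebraic: the identity $\pi_{r \mid R}(\vr \mid \mR) = \pi_r(\vr)/\pi_R(\mR)$ on $\Omega_R(\mR)$ requires fixing a reference measure on the fiber (via the coarea formula when $M$ is nonlinear), and the delta-function marginalization defining $\pi_R$ must be interpreted consistently with that choice. Rather than redo the underlying disintegration argument, I would invoke the conventions established in Section~\ref{sec:information-loss}—specifically Equations~\ref{eq:fiber-bayes} and \ref{eq:fiber-average}—to justify that the densities and expectations manipulated above are self-consistent, so that the calculation reduces to the elementary algebra sketched in the previous paragraphs.
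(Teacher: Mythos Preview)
Your proof is correct and follows essentially the same route as the paper: both arguments combine the Boltzmann identity $u(\vr) = -k_\mathrm{B} T \log \pi_r(\vr) + \mathrm{const.}$ with the fiber factorization $\pi_{r\mid R} = \pi_r/\pi_R$ and take a conditional expectation to recover $W(\mR) = -k_\mathrm{B} T \log \pi_R(\mR) + \mathrm{const.}$ The only cosmetic difference is direction---the paper starts from $E(\mR)-T S(\mR)$ and shows it collapses to $-k_\mathrm{B} T \log \pi_R(\mR)$, whereas you start from Bayes and isolate $-k_\mathrm{B} T \log \pi_R(\mR)$ first---and your remark on the coarea/disintegration subtlety is an extra observation the paper does not make.
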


\begin{proof}
Starting from the Boltzmann form of the coarse-grained density,
\begin{equation}
    \pi_R(\mR) \propto \exp[-W(\mR)/(k_\mathrm{B} T)],
\end{equation}
we express the PMF as
\begin{equation}
    W(\mR) = -k_\mathrm{B} T \log \pi_R(\mR) + \mathrm{const.}
    \label{eq:app-pmf-definition}
    \end{equation}
    Similarly, from the fine-grained Boltzmann distribution we can write
    \begin{equation}
    u(\vr) = -k_\mathrm{B} T \log \pi_r(\vr) + \mathrm{const.}
    \label{eq:app-fg-energy}
\end{equation}

Taking the conditional expectation of~\eqref{eq:app-fg-energy} over the conditional distribution 
$\pi_{r \mid R}$, we obtain
\begin{equation}
    \mathbb{E}_{r \mid R}[u(\vr)]
    = -k_\mathrm{B} T \, \mathbb{E}_{r \mid R}[\log \pi_r(\vr)] + \mathrm{const.}
    \label{eq:app-energy-exp}
\end{equation}

Inserting Equation~\ref{eq:app-energy-exp} into the terms of Equation~\ref{eq:app-pmf-decomposition}, written in  Equation~\ref{eq:app-pmf-terms}, we find:
\begin{align}
    W(\mR) &= E(\mR) - T S(\mR) \\ 
    &= \mathbb{E}_{r \mid R} [u(\vr)] + k_\mathrm{B} T \; \mathbb{E}_{r \mid R} \left [ \log \frac{\pi_r(\vr)}{\pi_R(\mR)} \right ] \\ 
    &= \mathbb{E}_{r \mid R} [u(\vr)] - \mathbb{E}_{r \mid R} [u(\vr)] - k_\mathrm{B} T \; \mathbb{E}_{r \mid R} [\log \pi_R(\mR)] + \mathrm{const.} \\ 
    &= - k_\mathrm{B} T \; \mathbb{E}_{r \mid R} [\log \pi_R(\mR)] + \mathrm{const.},
\end{align}
which recovers Equation~\ref{eq:app-pmf-definition} up to the unknown constant offset. However, since potentials are defined only up to an additive constant, we can drop the constant, completing the proof.
\end{proof}

\begin{remark}
The decomposition \(W = E - TS\) expresses the PMF as the conditional free energy of the fine-grained system constrained to the coarse-grained configuration \(\mR\). Here, \(E(\mR)\) denotes the mean internal energy of the fine-grained microstates compatible with \(\mR\), while \(S(\mR)\) quantifies their configurational entropy.

\end{remark}

\begin{remark}
The entropic contribution to the PMF, $S(\mR)$, can be identified with the local mapping entropy $S(\mR)$, which quantifies the information loss in a coarse-grained representation. Higher information loss corresponds to a lower mapping entropy and thus a higher value of the PMF, which in turn lowers the probability of the coarse-grained configuration.
\end{remark}

\subsection{Computation of Fiber Averages with Split-Flows} \label{app:fiber-averages}
Split flows allow us to directly access fiber averages, i.e., expectation values of observables defined on the fine-grained space \( \mathbb{R}^n \), restricted to the fiber \( \mathcal{F}(\mR) \) associated with a coarse-grained representative \( \mR \). We formalize the expression stated in Equation~\ref{eq:fiber-average-split-flow} in the main text with the following proposition.
\\

\begin{proposition}[Computation of fiber averages with split-flows] \label{prop:fiber-averages}
Let $\pi_R$ be a probability density on $\mathbb{R}^N$ and $\pi_{\epsilon \mid R}$ a conditional density on $\mathbb{R}^{n-N}$, defining a joint density $\pi_R \times \pi_{\epsilon \mid R}$ on $\mathbb{R}^n = \mathbb{R}^N \times \mathbb{R}^{n-N}$. Let $M : \mathbb{R}^n \to \mathbb{R}^N$ be a measurable coarse-graining map, and let $\phi_1 : \mathbb{R}^n \to \mathbb{R}^n$ be a diffeomorphism satisfying
\begin{equation} \label{eq:app-pushforward-fiber-average}
    (\phi_1)_\# \; \pi_R(\mR) \pi_{\epsilon \mid R}(\boldsymbol{\epsilon}) 
    = \left|\det J_{\phi_1}(\mR, \boldsymbol{\epsilon})\right|^{-1} \, \pi_R(\mR)\, \pi_{\epsilon \mid R} (\boldsymbol{\epsilon}\mid \mR ) 
    = \pi_r(\phi_1(\mR, \boldsymbol{\epsilon})),
\end{equation}
where $\pi_r$ is a target density on $\mathbb{R}^n$. Assume further that $\phi_1$ inverts the coarse-graining map in the sense that
\begin{equation}
    M \circ \phi_1(\mR, \boldsymbol{\epsilon}) = \mR,
\end{equation}
for all $(\mR, \boldsymbol{\epsilon}) \in \mathbb{R}^N \times \mathbb{R}^{n-N}$. Finally, let $O : \mathbb{R}^n \to \mathbb{R}^d$ be a measurable observable. We can then write the conditional expectation of $O$ over $\pi_{r \mid R}$ as:
\begin{equation}
    \mathbb{E}_{r \mid R}[O(\vr)] = \mathbb{E}_{\epsilon \mid R}[O(\phi_1(\mR, \boldsymbol{\epsilon}))].
\end{equation}
\end{proposition}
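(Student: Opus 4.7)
\textbf{Proof plan for Proposition~\ref{prop:fiber-averages}.} The plan is to reduce both sides of the claimed identity to integrals against the noise density $\pi_{\epsilon \mid R}$ by performing a single change of variables, and to exploit the two hypotheses of the proposition in parallel: the pushforward relation converts the Jacobian factor into a clean product $\pi_R\, \pi_{\epsilon \mid R}$, while the coarse-graining inversion property collapses the delta function in a way that leaves no residual Jacobian.

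First, I would write $\mathbb{E}_{r \mid R}[O(\vr)]$ in its distributional form using Equation~\ref{eq:fiber-average}, so that the numerator and denominator are $\int_{\mathbb{R}^n} \mathrm{d}\vr \; O(\vr)\, \pi_r(\vr)\, \delta(M(\vr) - \mR)$ and $\int_{\mathbb{R}^n} \mathrm{d}\vr \; \pi_r(\vr)\, \delta(M(\vr) - \mR)$, respectively. Then I would substitute $\vr = \phi_1(\mR', \boldsymbol{\epsilon})$, producing a factor $|\det J_{\phi_1}(\mR', \boldsymbol{\epsilon})|$ from the measure and a factor $|\det J_{\phi_1}(\mR', \boldsymbol{\epsilon})|^{-1} \pi_R(\mR')\, \pi_{\epsilon \mid R}(\boldsymbol{\epsilon} \mid \mR')$ from $\pi_r(\phi_1(\mR', \boldsymbol{\epsilon}))$ via Equation~\ref{eq:app-pushforward-fiber-average}. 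The two Jacobians cancel exactly, which is the key simplification enabled by the pushforward hypothesis.

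Next, I would handle the transformed delta function. By the inversion hypothesis $M \circ \phi_1(\mR', \boldsymbol{\epsilon}) = \mR'$, the delta simplifies as $\delta(M(\phi_1(\mR', \boldsymbol{\epsilon})) - \mR) = \delta(\mR' - \mR)$, with no additional Jacobian correction because the map $\mR' \mapsto \mR'$ has identity derivative for every fixed $\boldsymbol{\epsilon}$. Integrating over $\mR'$ then collapses the numerator to $\pi_R(\mR) \int \mathrm{d}\boldsymbol{\epsilon}\; O(\phi_1(\mR, \boldsymbol{\epsilon}))\, \pi_{\epsilon \mid R}(\boldsymbol{\epsilon} \mid \mR) = \pi_R(\mR)\, \mathbb{E}_{\epsilon \mid R}[O(\phi_1(\mR, \boldsymbol{\epsilon}))]$, and the denominator (the same calculation with $O \equiv 1$) to $\pi_R(\mR)$. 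Taking the ratio yields the stated identity.

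The only delicate step is the delta-function simplification, since one must justify that the coordinate chart $\vr \mapsto (\mR', \boldsymbol{\epsilon})$ induced by $\phi_1^{-1}$ factors the fiber measure in exactly the way a disintegration demands. A rigorous rendering either invokes the smooth coarea formula on the submersion $M \circ \phi_1 = \pi_{\mR}$ (the canonical projection) or, equivalently, smooths the delta by a Gaussian mollifier and passes to the limit after the change of variables to avoid handling generalized functions directly. Either route is standard, so I expect no conceptual obstacle beyond making the formal manipulations rigorous; the crux of the argument is really the double cancellation of the Jacobian factor (by the pushforward hypothesis) and of the coarse-graining composition (by the inversion hypothesis).
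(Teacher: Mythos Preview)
Your proposal is correct and follows essentially the same route as the paper's proof: substitute $\vr = \phi_1(\mR', \boldsymbol{\epsilon})$, collapse $\delta(M(\phi_1(\mR', \boldsymbol{\epsilon})) - \mR)$ to $\delta(\mR' - \mR)$ via the inversion hypothesis, and use the pushforward relation to convert $\pi_r(\phi_1)\,|\det J_{\phi_1}|$ into $\pi_R\,\pi_{\epsilon \mid R}$, after which the ratio cancels $\pi_R(\mR)$. The only cosmetic difference is that the paper applies the delta simplification before invoking the pushforward identity whereas you do them in the opposite order, and your remarks on rigorizing the delta manipulation go slightly beyond what the paper spells out.
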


\begin{proof}
Let $\pi_r$, $\pi_R$, $\pi_{\epsilon \mid R}$, $\phi_1$, $M$ and $O$ obey the properties above. By definition, we can write the fiber average of $O$ for a given coarse-grained representative $\mR$ as
\begin{equation}
    \mathbb{E}_{r \mid R}[O(\vr)]
    = \frac{\int_{\mathbb{R}^n} \mathrm{d}\vr \, \delta(M(\vr) - \mR) \; O(\vr)\, \pi_r(\vr)}
    {\int_{\mathbb{R}^n} \mathrm{d}\vr \, \delta(M(\vr) - \mR) \; \pi_r(\vr)}.
\end{equation}
By substituting $\vr=\phi_1(\mR', \boldsymbol{\epsilon})$ and using the change-of-variables theorem
$d\vr = \left|\det J_{\phi_1}(\mR',\boldsymbol{\epsilon})\right|\, d\mR'\, d\boldsymbol{\epsilon}$, we can rewrite the expectation as
\begin{align}
    \mathbb{E}_{r \mid R}[O(\vr)]
    &= \frac{\int_{\mathbb{R}^n} \mathrm{d}\vr \, \delta(M(\vr) - \mR) \; O(\vr) \pi_r(\vr)}
    {\int_{\mathbb{R}^n} \mathrm{d}\vr \, \delta(M(\vr) - \mR) \; \pi_r(\vr)} \\
    &= \frac{\int_{\mathbb{R}^N}\!\!\int_{\mathbb{R}^{n-N}}  
    \mathrm{d}\mR' \mathrm{d} \boldsymbol{\epsilon} \,
    \delta(M(\phi_1(\mR', \boldsymbol{\epsilon})) - \mR) \;
    O(\phi_1(\mR', \boldsymbol{\epsilon}))\,
    \pi_r(\phi_1(\mR', \boldsymbol{\epsilon}))\,
    \left|\det J_{\phi_1}(\mR', \boldsymbol{\epsilon})\right|}
    {\int_{\mathbb{R}^N} \!\!\int_{\mathbb{R}^{n-N}}
    \mathrm{d}\mR' \mathrm{d} \boldsymbol{\epsilon} \,
    \delta(M(\phi_1(\mR', \boldsymbol{\epsilon})) - \mR) \;
    \pi_r(\phi_1(\mR', \boldsymbol{\epsilon}))\,
    \left|\det J_{\phi_1}(\mR', \boldsymbol{\epsilon})\right|} \\
    &\overset{M\circ\phi_1=\mathrm{Id}_R}{=}
    \frac{\int_{\mathbb{R}^{n-N}}
    \mathrm{d}\boldsymbol{\epsilon} \;
    O(\phi_1(\mR, \boldsymbol{\epsilon}))\,
    \pi_r(\phi_1(\mR, \boldsymbol{\epsilon}))\,
    \left|\det J_{\phi_1}(\mR, \boldsymbol{\epsilon})\right|}
    {\int_{\mathbb{R}^{n-N}}
    \mathrm{d}\boldsymbol{\epsilon} \;
    \pi_r(\phi_1(\mR, \boldsymbol{\epsilon}))\,
    \left|\det J_{\phi_1}(\mR, \boldsymbol{\epsilon})\right|},
\end{align}
where $J_{\phi_1}(\mR, \boldsymbol{\epsilon})$ denotes the Jacobian matrix of partial derivatives of $\phi_1$. Identifying
\[
\pi_r\ (\phi_1(\mR, \boldsymbol{\epsilon}) )\,
\left|\det J_{\phi_1}(\mR, \boldsymbol{\epsilon})\right|
= \pi_R(\mR)\, \pi_{\epsilon \mid R} (\boldsymbol{\epsilon}\mid \mR )
\]
via the pushforward relation in Equation~\ref{eq:app-pushforward-fiber-average}, we obtain
\begin{align}
    \mathbb{E}_{r \mid R}[O(\vr)] 
    &= \frac{\int_{\mathbb{R}^{n-N}} 
    \mathrm{d}\boldsymbol{\epsilon}\; 
    O (\phi_1(\mR, \boldsymbol{\epsilon}))\,
    \pi_R(\mR)\, 
    \pi_{\epsilon \mid R}(\boldsymbol{\epsilon}\mid \mR)}
    {\int_{\mathbb{R}^{n-N}} 
    \mathrm{d}\boldsymbol{\epsilon}\; 
    \pi_R(\mR)\, 
    \pi_{\epsilon \mid R}(\boldsymbol{\epsilon}\mid \mR)} \\
    &= \int_{\mathbb{R}^{n-N}} 
    \mathrm{d}\boldsymbol{\epsilon}\; 
    O(\phi_1(\mR, \boldsymbol{\epsilon}))\,
    \pi_{\epsilon \mid R}(\boldsymbol{\epsilon}\mid \mR)\\
    &= \mathbb{E}_{\epsilon \mid R}[O(\phi_1(\mR, \boldsymbol{\epsilon}))],
\end{align}
which completes the proof.
\end{proof}

\begin{remark}[Practical estimation] \label{remark:mc-estimate}
In practice, we approximate the expectation value over $\pi_{\epsilon \mid \mR}$ using a Monte Carlo estimate:
\begin{equation}
    \mathbb{E}_{\epsilon \mid R}[O(\phi_1(\mR, \boldsymbol{\epsilon}))] \approx \frac{1}{M} \sum_{i=1}^M O(\phi_1(\mR, \boldsymbol{\epsilon}_i)).
\end{equation}
Here, $\boldsymbol{\epsilon}_i$ are $M$ samples drawn from the noise distribution $\pi_{\epsilon \mid \mR}$, which is straightforward to sample from by construction.
\end{remark}

\subsection{Mapping Entropy Estimation with Split-Flows}
Split-flows allow us to obtain an unbiased estimate of the local mapping entropy $S(\mR)$ for a given coarse-grained configuration $\mR$. This section will complement the derivations of Equation~\ref{eq:local-mapping-entropy-split-flow} in the main text with a formal proposition, showing that the estimator arises naturally from the pushforward structure of the flow. \\

\begin{proposition}[Mapping entropy estimation with split-flows]
Let $\pi_R$ be a probability density on $\mathbb{R}^N$ and $\pi_{\epsilon \mid R}$ a conditional density on $\mathbb{R}^{n-N}$, defining a joint density $\pi_R \times \pi_{\epsilon \mid R}$ on $\mathbb{R}^n = \mathbb{R}^N \times \mathbb{R}^{n-N}$.  Let $M : \mathbb{R}^n \to \mathbb{R}^N$ be a measurable coarse-graining map, and let $\phi_1 : \mathbb{R}^n \to \mathbb{R}^n$ be a diffeomorphism satisfying
\begin{equation} \label{eq:app-pushforward-me}
    (\phi_1)_\# \; \pi_R(\mR) \pi_{\epsilon \mid R}(\boldsymbol{\epsilon}) 
    = \left|\det J_{\phi_1}(\mR, \boldsymbol{\epsilon})\right|^{-1} \pi_R(\mR)\, \pi_{\epsilon \mid R} (\boldsymbol{\epsilon}\mid \mR) 
    = \pi_r(\phi_1(\mR, \boldsymbol{\epsilon})),
\end{equation}
where $\pi_r$ is a target density on $\mathbb{R}^n$. Assume further that $\phi_1$ inverts the coarse-graining map in the sense that
\begin{equation}
    M \circ \phi_1(\mR, \boldsymbol{\epsilon}) = \mR,
\end{equation}
for all $(\mR, \boldsymbol{\epsilon}) \in \mathbb{R}^N \times \mathbb{R}^{n-N}$. The local mapping entropy, defined as:
\begin{equation}
    S(\mR) = - k_\mathrm{B} \int_{\mathcal{F}(\mR)} \mathrm{d}\vr \; \pi_{r \mid R}(\vr \mid \mR) \log \pi_{r \mid R}(\vr \mid \mR)
\end{equation}
then can be estimated via the split-flow setup as:
\begin{equation}
    S (\mR)
    = - k_\mathrm{B} \, \mathbb{E}_{\epsilon \mid R} \left[
    \log \pi_{\epsilon \mid R} (\boldsymbol{\epsilon}\mid \mR)
    \right ]
    + k_\mathrm{B} \, \mathbb{E}_{\epsilon \mid R} \left[ 
    \log  \left|\det J_{\phi_1}(\mR,\boldsymbol{\epsilon})\right|
    \right].
\end{equation}
\end{proposition}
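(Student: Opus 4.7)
The plan is to reduce the definition of the configuration-dependent mapping entropy to an expectation under the noise distribution by combining Bayes' rule with the fiber-average result of Proposition A.2 and the pushforward identity assumed in Equation~\ref{eq:app-pushforward-me}. Essentially all the heavy lifting is already done by the previous proposition; what remains is a clean algebraic manipulation.

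First, I would rewrite the definition of $S_{\mathrm{map}\mid R}(\mR)$ using $\pi_{r\mid R}(\vr\mid \mR) = \pi_r(\vr)/\pi_R(\mR)$, yielding
\begin{equation*}
    S_{\mathrm{map}\mid R}(\mR) = -k_\mathrm{B}\,\mathbb{E}_{r\mid R}\!\left[\log \pi_r(\vr) - \log \pi_R(\mR)\right].
\end{equation*}
Since $\log \pi_R(\mR)$ is constant on the fiber, it factors out of the conditional expectation. I would then apply Proposition~\ref{prop:fiber-averages} with the observable $O(\vr) = \log \pi_r(\vr)$, which converts the fiber average into an expectation under $\pi_{\epsilon\mid R}$, giving
\begin{equation*}
    S_{\mathrm{map}\mid R}(\mR) = -k_\mathrm{B}\,\mathbb{E}_{\epsilon\mid R}\!\left[\log \pi_r(\phi_1(\mR,\boldsymbol{\epsilon}))\right] + k_\mathrm{B}\log \pi_R(\mR).
\end{equation*}

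Next, I would use the pushforward identity from Equation~\ref{eq:app-pushforward-me}, taking logarithms:
\begin{equation*}
    \log \pi_r(\phi_1(\mR,\boldsymbol{\epsilon})) = \log \pi_R(\mR) + \log \pi_{\epsilon\mid R}(\boldsymbol{\epsilon}\mid \mR) - \log \big|\det J_{\phi_1}(\mR,\boldsymbol{\epsilon})\big|.
\end{equation*}
Substituting this into the previous expression, the two $\log \pi_R(\mR)$ contributions cancel (the second appears inside the expectation but does not depend on $\boldsymbol{\epsilon}$), leaving exactly
\begin{equation*}
    S_{\mathrm{map}\mid R}(\mR) = -k_\mathrm{B}\,\mathbb{E}_{\epsilon\mid R}\!\left[\log \pi_{\epsilon\mid R}(\boldsymbol{\epsilon}\mid \mR)\right] + k_\mathrm{B}\,\mathbb{E}_{\epsilon\mid R}\!\left[\log \big|\det J_{\phi_1}(\mR,\boldsymbol{\epsilon})\big|\right],
\end{equation*}
as claimed.

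There is no serious obstacle here: the argument is a chain of substitutions. The only subtlety worth flagging explicitly in the write-up is that the assumption $M\circ\phi_1(\mR,\boldsymbol{\epsilon})=\mR$ is what makes Proposition~\ref{prop:fiber-averages} directly applicable without any residual Jacobian factor from the coarse-graining constraint; this is also why one can cleanly identify the volume-change term with the log-Jacobian of $\phi_1$ alone. For the continuous-time parameterization used in the main text, one would additionally invoke the standard identity $\log|\det J_{\phi_1}| = \int_{[0,1]} \mathrm{d}\tau\,\nabla\cdot \vv_\tau^\theta(\phi_\tau(\mR,\boldsymbol{\epsilon}))$ to recover Equation~\ref{eq:configuration-dependent-mapping-entropy-split-flow} verbatim.
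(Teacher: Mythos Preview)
Your proposal is correct and follows essentially the same route as the paper's proof: rewrite $S_{\mathrm{map}\mid R}$ via $\pi_{r\mid R}=\pi_r/\pi_R$, invoke Proposition~\ref{prop:fiber-averages} to pass to an expectation over $\pi_{\epsilon\mid R}$, and then substitute the log of the pushforward identity~\eqref{eq:app-pushforward-me} so that the $\log\pi_R(\mR)$ terms cancel. The paper's write-up differs only cosmetically, deriving $\pi_{r\mid R}=\pi_r/\pi_R$ through Bayes' rule with $\pi_{R\mid r}\equiv 1$ on $\Omega_R(\mR)$ rather than citing it directly.
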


\begin{proof}
Let $\phi_1$, $\pi_r$, and $\pi_R$ fulfill the properties above. Starting from the definition of the local mapping entropy, we use Bayes' formula to rewrite the integrand:
\begin{align}
    S(\mR)
    &= - k_\mathrm{B} \int_{\mathcal{F}(\mR)} \mathrm{d}\vr \; \pi_{r \mid R}(\vr \mid \mR) \log \pi_{r \mid R}(\vr \mid \mR) \\
    &= - k_\mathrm{B} \int_{\mathcal{F}(\mR)} \mathrm{d}\vr \; \pi_{r \mid R}(\vr \mid \mR) \log \frac{\pi_{R \mid r}(\mR \mid \vr) \pi_r(\vr)}{\pi_R(\mR)} \\
    &= - k_\mathrm{B} \int_{\mathcal{F}(\mR)} \mathrm{d}\vr \; \pi_{r \mid R}(\vr \mid \mR) \log \frac{\pi_r(\vr)}{\pi_R(M(\vr))} \\
    &= - k_\mathrm{B} \, \mathbb{E}_{r \mid R} \left[ \log \frac{\pi_r(\vr)}{\pi_R(M(\vr))} \right]
\end{align}
where we used that the posterior $\pi_{R \mid r}(\mR \mid \vr) \equiv 1$ on the integration domain $\mathcal{F}({\mR})$ and replaced $\mR$ by $M(\vr)$. Next we are going to leverage the result of Proposition~\ref{prop:fiber-averages} to obtain:
\begin{align}
    S(\mR)
    &= - k_\mathrm{B} \, \mathbb{E}_{r \mid R} \left[ \log \frac{\pi_r(\vr)}{\pi_R(M(\vr))} \right] \\
    &= - k_\mathrm{B} \, \mathbb{E}_{\epsilon \mid R} \left[ \log \frac{\pi_r(\phi_1(\mR, \boldsymbol{\epsilon}))}{\pi_R(M(\phi_1(\mR, \boldsymbol{\epsilon})))} \right] \\
    &= - k_\mathrm{B} \, \mathbb{E}_{\epsilon \mid R} \left[ \log \frac{\pi_r(\phi_1(\mR, \boldsymbol{\epsilon}))}{\pi_R(\mR)} \right].
\end{align}
Inserting the pushforward relation in Equation~\ref{eq:app-pushforward-me} then yields:
\begin{align}
    S(\mR)
    &= - k_\mathrm{B} \, \mathbb{E}_{\epsilon \mid R} \left[ \log \frac{\pi_r(\phi_1(\mR, \boldsymbol{\epsilon}))}{\pi_R(\mR)} \right] \\
    &= - k_\mathrm{B} \, \mathbb{E}_{\epsilon \mid R} \left[
    \log \pi_{\epsilon \mid R} (\boldsymbol{\epsilon}\mid \mR)
    \right ]
    + k_\mathrm{B} \, \mathbb{E}_{\epsilon \mid R} \left[ 
    \log  \left|\det J_{\phi_1}(\mR,\boldsymbol{\epsilon})\right|
    \right],
\end{align}
which completes the proof.
\end{proof}

\begin{remark}[Mapping entropy estimation with continuous normalizing flows]
In case the flow $\phi_t$ is parameterized in continuous time by an underlying velocity field $\vv_t$, we can replace its log Jacobian determinant with an integral over the interpolation interval $[0, 1]$ of the divergence of the flow \cite{Lipman2023}:
\begin{equation}
\log \left|\det J_{\phi_1}(\mR,\boldsymbol{\epsilon})\right|
= \int_0^1 \mathrm{d}\tau \; \nabla \cdot \vv_\tau (\phi_\tau(\mR, \boldsymbol{\epsilon})).
\end{equation}
This recovers the estimator presented in Equation~\ref{eq:local-mapping-entropy-split-flow}.
\end{remark}

\begin{remark}[Variance of the mapping entropy estimator]
In practice, we approximate the expectation over $\pi_{\epsilon \mid R}$ in the estimate of the local mapping entropy in Equation~\ref{eq:local-mapping-entropy-split-flow} using the Monte Carlo estimator presented in Remark~\ref{remark:mc-estimate}. In Figure~\ref{fig:app-mapping-entropy-variance}, we analyze the variance of this estimator for the solute-in-lipid-bilayer experiment described in Section~\ref{sec:lipid-bilayer}. We find that the relative variance of the estimator follows a power-law decay with respect to the number of samples.
\begin{figure}[ht]
    \centering
    \includegraphics[width=0.5\linewidth]{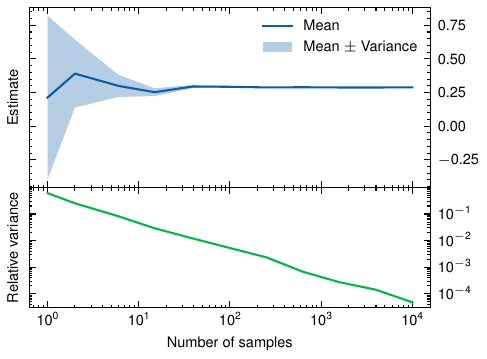}
    \caption{Estimate and variance of the local excess mapping entropy as a function of the number of samples. We show the estimate (top) and the relative variance with respect to the mean (bottom). The linear behavior of the relative variance in the log-log plot indicates a power-law decay.}
    \label{fig:app-mapping-entropy-variance}
\end{figure}
\end{remark}

\subsection{Validity of the Split-Flow Coupling}
The training algorithm for two-sided flow matching relies on drawing samples from the endpoint distributions. In the split-flow setup, we propose constructing a coupling based on the coarse-graining map, which encourages the flow to correctly pair fine- and coarse-grained configurations. We show that this coupling is a valid coupling. \\

\begin{proposition}[Validity of the split-flow coupling]
Let $\pi_r$ be a probability density on $\mathbb{R}^n$. Let $\pi_R$ be a probability density on $\mathbb{R}^N$ and let $\pi_{\epsilon \mid R}$ be a conditional density on $\mathbb{R}^{n-N}$, defining a joint density $\pi_R \times \pi_{\epsilon \mid R}$ on $\mathbb{R}^n = \mathbb{R}^N \times \mathbb{R}^{n-N}$. Finally, let $M : \mathbb{R}^n \to \mathbb{R}^N$ be a coarse-graining map. The joint coupling
\begin{equation}
    \pi_{R, \epsilon, r}(\mR, \boldsymbol{\epsilon}, \vr)
    = \pi_r(\vr) \, \delta(\mR - M(\vr)) \, \pi_{\epsilon \mid R}(\boldsymbol{\epsilon} \mid \mR).
\end{equation}
then defines a valid coupling of the two distributions $\pi_r$ and $\pi_R \times \pi_{\epsilon \mid R}$ in the sense that the marginal over $\vr$ is $\pi_r$, and the marginal over $(\mR, \boldsymbol{\epsilon})$ is $\pi_R \times \pi_{\epsilon \mid R}$.
\end{proposition}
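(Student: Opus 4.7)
The proof reduces to directly verifying the two required marginalization identities, so the plan is to compute each marginal by standard integration against the joint coupling. Both computations follow from treating the Dirac delta as an integration kernel that selects the coarse-grained image $\mR = M(\vr)$, and from recognizing that the marginalization formula for $\pi_R$ in Equation~\ref{eq:explicit-marginal} is already built into the definition of $\pi_{R,\epsilon,r}$.

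For the marginal in $\vr$, I would integrate $\pi_{R,\epsilon,r}$ against $\mathrm{d}\mR \, \mathrm{d}\boldsymbol{\epsilon}$. The $\mR$-integration is resolved by the delta, giving $\pi_r(\vr)\, \pi_{\epsilon \mid R}(\boldsymbol{\epsilon} \mid M(\vr))$, and the subsequent $\boldsymbol{\epsilon}$-integration yields $1$ because $\pi_{\epsilon \mid R}(\cdot \mid M(\vr))$ is a properly normalized conditional density on $\mathbb{R}^{n-N}$. This leaves exactly $\pi_r(\vr)$, as required.

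For the marginal in $(\mR, \boldsymbol{\epsilon})$, I would instead integrate out $\vr$. The factor $\pi_{\epsilon \mid R}(\boldsymbol{\epsilon} \mid \mR)$ does not depend on $\vr$ and therefore factors out of the integral, leaving $\int \mathrm{d}\vr \, \pi_r(\vr)\, \delta(\mR - M(\vr))$, which by Equation~\ref{eq:explicit-marginal} equals $\pi_R(\mR)$ up to the overall normalization. Recombining with the pulled-out factor reassembles the product measure $\pi_R(\mR)\, \pi_{\epsilon \mid R}(\boldsymbol{\epsilon} \mid \mR)$, which is the second required marginal.

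The only real subtlety, and the closest thing to an obstacle, is regularity: the composition of the delta with $\pi_{\epsilon \mid R}(\boldsymbol{\epsilon} \mid \cdot)$ needs $\pi_{\epsilon \mid R}$ to be a sufficiently well-behaved conditional density in $\mR$ for the evaluation at $\mR = M(\vr)$ to be meaningful. Under the standing assumptions of the paper---measurable coarse-graining map $M$ and standard noise choices such as Gaussian or uniform $\pi_{\epsilon \mid R}$---this is automatic, so no additional structural argument is required beyond the two direct integrations above.
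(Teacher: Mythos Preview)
Your proposal is correct and follows essentially the same approach as the paper: both marginals are verified by direct integration, using the delta to collapse the $\mR$-integral and the normalization of $\pi_{\epsilon \mid R}$ for the first marginal, and invoking Equation~\ref{eq:explicit-marginal} (up to normalization) for the second. Your added remark on regularity of $\pi_{\epsilon \mid R}$ is a reasonable caveat that the paper does not make explicit, but otherwise the arguments coincide.
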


\begin{proof}
Let $\pi_r$, $\pi_R$, $\pi_{\epsilon \mid R}$, and $M$ be defined as above. The marginal of $\pi_{R, \epsilon, r}$ over $\vr$ then reads:
\begin{align}
    \int_{\mathbb{R}^N}\mathrm{d}\mR 
    \int_{\mathbb{R}^{n-N}}\mathrm{d}\boldsymbol{\epsilon} \;
    \pi_{R, \epsilon, r}(\mR, \boldsymbol{\epsilon}, \vr) 
    = \int_{\mathbb{R}^N}\mathrm{d}\mR 
    \int_{\mathbb{R}^{n-N}}\mathrm{d}\boldsymbol{\epsilon} \;
    \pi_r(\vr) \, \delta(\mR - M(\vr)) \, \pi_{\epsilon \mid R}(\boldsymbol{\epsilon} \mid \mR) 
    = \pi_r(\vr),
\end{align}
since the integral over $\mR$ evaluates at $\mR = M(\vr)$ and the inner integral over $\boldsymbol{\epsilon}$ simply integrates to $1$ due to normalization of $\pi_{\epsilon \mid R}$. Furthermore, the marginal over $(\mR, \boldsymbol{\epsilon})$ reads:
\begin{equation} \label{eq:app-marginal-cg}
    \int_{\mathbb{R}^n}\mathrm{d}\vr \;     
    \pi_{R, \epsilon, r}(\mR, \boldsymbol{\epsilon}, \vr) 
    = \underbrace{\int_{\mathbb{R}^n}\mathrm{d}\vr \;
    \pi_r(\vr) \, \delta(\mR - M(\vr))}_{= \, \pi_R(\mR)} \, \pi_{\epsilon \mid R}(\boldsymbol{\epsilon} \mid \mR) 
    = \pi_R(\mR) \, \pi_{\epsilon \mid R}(\boldsymbol{\epsilon} \mid \mR),
\end{equation}
where we used the definition of the coarse grained density in Equation~\ref{eq:explicit-marginal}.
\end{proof} 

\begin{remark}[Dimensionality bridging via augmentation]  
By augmenting the coarse-grained configuration \( \mR \in \mathbb{R}^N \) with auxiliary noise \( \boldsymbol{\epsilon} \in \mathbb{R}^{n - N} \), we construct a joint variable \( (\mR, \boldsymbol{\epsilon}) \in \mathbb{R}^n \) that enables training a continuous normalizing flow \( \phi_t : \mathbb{R}^n \rightarrow \mathbb{R}^n \) via flow matching. Since both endpoint distributions now are defined over the same space, the flow is well-defined and can learn a bijective transport from \( \pi_R \times \pi_{\epsilon \mid R} \) to \( \pi_r \). This approach resolves the non-invertibility of the coarse-graining map by converting the ill-posed inverse problem into a generative one.
\end{remark}

\subsection{A Geometric Perspective} \label{app:geometric-perspective}

\begin{figure}[ht]
    \centering
    \includegraphics[width=0.7\linewidth]{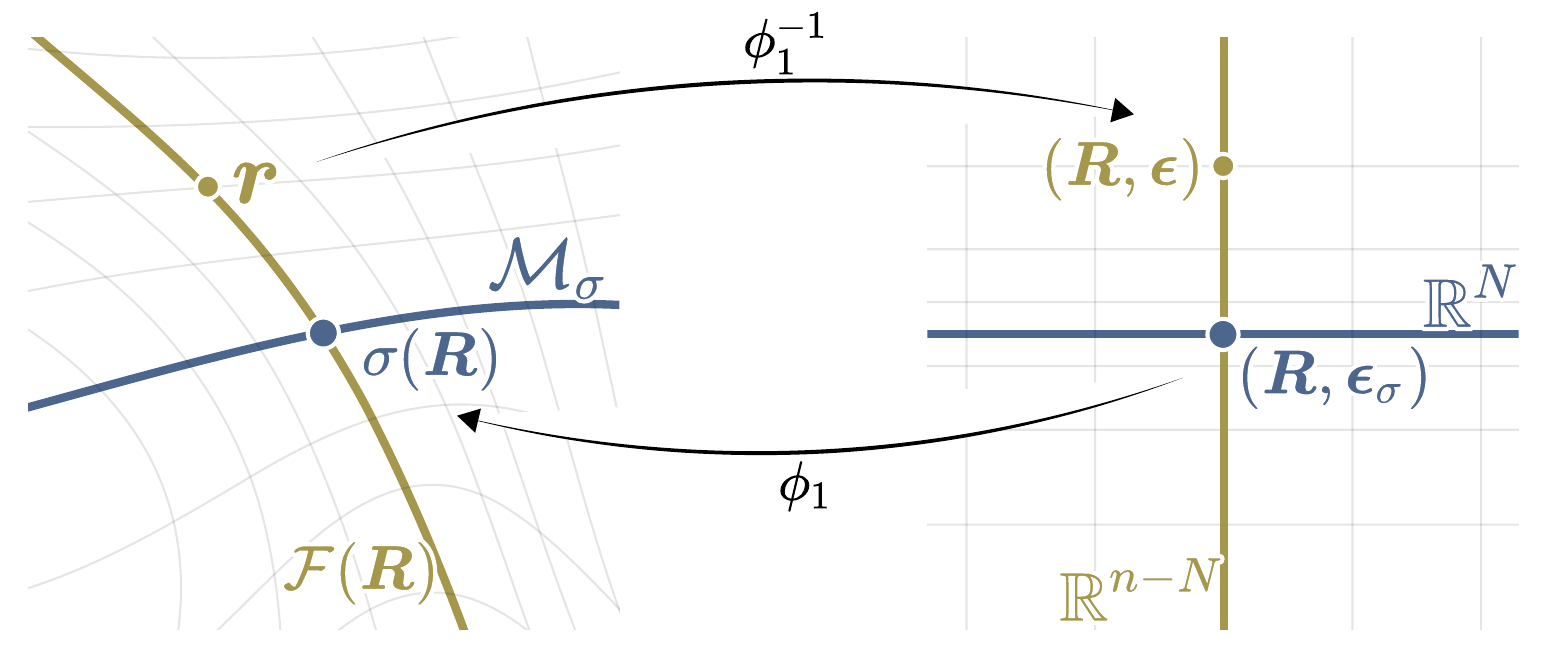}
    \caption{A geometric perspective on split-flows. The coarse-graining map $M$ uniquely defines the fiber $\mathcal{F}(\boldsymbol{R})$ over a coarse-grained representation $\boldsymbol{R}$. By fixing a section $\sigma(\boldsymbol{R})$, we can define a coarse manifold $\mathcal{M}_\sigma$. Split-flows then learn a global coordinate transformation $\vr \mapsto (\mR, \boldsymbol{\epsilon})$, that disentangles the geometric structure induced by the coarse-graining map.}
    \label{fig:geometric-perspective}
\end{figure}

In this section we provide a geometric view on split-flows (see Figure~\ref{fig:geometric-perspective}), following ideas introduced by \cite{Brehmer2020}. The underlying principle is to disambiguate between (i) manifold-learning, i.e., identifying the underlying manifold $\mathcal{M}_\sigma$ of slow degrees of freedom and the orthogonal fiber $\mathcal{F}(\mR)$ of fast degrees of freedom, and (ii) estimating the conditional density $\pi_{r \mid R}$ of the fast degrees of freedom on the fiber.

\textit{Fiber:} We consider a smooth coarse-graining map $M: \mathbb{R}^n \rightarrow \mathbb{R}^N$, which associates fine-grained configurations $\vr \in \mathbb{R}^n$ to a coarse-grained representative $M(\vr) = \mR \in \mathbb{R}^N$. The fiber $\mathcal{F}(\mR)$ over a coarse-grained representative $\mR$ is the set
\begin{equation}
    \mathcal{F}(\mR) = \{\vr \in \mathbb{R}^n \mid M(\vr) = \mR\}.
\end{equation}
Its position and shape are uniquely determined by the position $\mR$ and the mapping $M$. Moving along the fiber changes the fast degrees of freedom, while keeping the slow degrees of freedom fixed. Assuming that the coarse-graining map $M$ is a smooth submersion, i.e., its Jacobian $J_M$ has full rank, the fibers for different values of $\mR$ form a foliation of the configurational space $\mathbb{R}^n$.

\textit{Local tangent space decomposition:} Due to the submersion theorem \cite{Lee2003}, we can locally decompose the tangent space $T_{\vr}\mathbb{R}^n$ into the fast (fiber) and slow (coarse) directions induced by $M$ at $\vr \in \mathbb{R}^n$:
\begin{equation}
    T_{\vr}\mathbb{R}^n = \ker J_M(\vr) \,  \oplus \, \mathrm{range} \, J_M(\vr)^\top,
\end{equation}
where $J_M(\vr) \in \mathbb{R}^{N \times n}$ is the Jacobian matrix of partial derivatives of $M$, which we assume to have full rank. We can further identify the two components as local tangent spaces of the fiber $\mathcal{F}(\mR)$ and a locally defined coarse manifold $\mathcal{M}$, respectively:
\begin{equation}
    T_{\vr}\mathcal{F}(\mR) = \ker J_M(\vr), \qquad T_{\vr} \mathcal{M} = \mathrm{range} \, J_M(\vr)^\top.
\end{equation}

\textit{Gauge-freedom of the coarse manifold:} While the fiber is uniquely defined, the definition of a coarse manifold $\mathcal{M}$ is ambiguous. To resolve this, we choose a section $\sigma(\mR) \in \mathcal{F}(\mR)$ to uniquely define a coarse manifold $\mathcal{M}_\sigma$, e.g., \cite{Pimentel2025a} propose to use the minimum-energy configuration on the fiber. In general, the choice of section $\sigma: \mathbb{R}^N \rightarrow \mathbb{R}^n$, remains a gauge-freedom, which must be fixed to select a unique coarse manifold among the family of manifolds compatible with the local tangent-space decomposition induced by $M$.

\textit{Globally linearized structure:} Split-flows parametrize a diffeomorphic coordinate map 
\begin{equation}
    \phi_1^{-1}: \mathbb{R}^n \to \mathbb{R}^N \times \mathbb{R}^{n-N}, \qquad \vr \mapsto \left( \mR, \boldsymbol{\epsilon} \right)
\end{equation}
whose first $N$ components coincide with the coarse-graining map $M$. In these coordinates, the fibers $\mathcal{F}(\mR)$ are mapped to affine
subspaces $\{\mR\}\times\mathbb{R}^{n-N}$, while the coarse manifold
$\mathcal{M}_\sigma$ is mapped to the coordinate subspace
$\mathbb{R}^N\times\{\boldsymbol{\epsilon}_\sigma\}$. Thereby, split flows (i) learn a coordinate system that disentangles the geometric structure induced by the coarse-graining map $M$ into slow (coarse) and fast (fiber) degrees of freedom, and (ii) enable density estimation on each fiber by mapping the corresponding conditional distribution $\pi_{r \mid R}$ on $\mathcal{F}(\mR)$ to a tractable density $\pi_{\epsilon \mid R}$ on the affine subspace $\{\mR\}\times\mathbb{R}^{n-N}$.

\section{EXPERIMENTAL DETAILS} \label{app:experimental-details}

In this section we are going to provide additional details on the experiments performed in the experimental section of the main text. These include details on data generation, model parameterization, evaluation, and model training.

\subsection{Chignolin} \label{app:chignolin}

\subsubsection{Data Generation} \label{app:chignolin-data-generation}
\textit{Langevin molecular dynamics:} Molecular dynamics (MD) is a trajectory-based sampling algorithm that simulates the time evolution of the configuration $\mathbf{r}$ of a molecular system using Newton's equations of motion. MD simulations are often performed at constant temperature $T$, which is maintained by coupling the system to an external heat bath—a procedure referred to as \emph{thermostatting}. One widely used approach is the Langevin equation, which augments Newtonian dynamics with friction and stochastic thermal forces:
\begin{equation} \label{eq:app-langevin-dynamics}
    m_i \frac{\mathrm{d}^2 \vr_i}{\mathrm{d}t^2}
    = - \nabla_{\vr_i} u(\vr)
    - \gamma m_i \frac{\mathrm{d}\vr_i}{\mathrm{d}t}
    + \sqrt{2 \gamma m_i k_\mathrm{B} T}\, \boldsymbol{\zeta}_i(t),
\end{equation}
where $m_i$ and $\vr_i$ denote the mass and position of the $i$-th particle, $u(\vr)$ is the potential energy function, $\gamma$ is the friction coefficient, $k_\mathrm{B}$ is Boltzmann's constant, and $\boldsymbol{\zeta}_i(t)$ is Gaussian white noise with zero mean and unit variance. The deterministic term $-\nabla_{\vr_i} u(\vr)$ drives the system according to interatomic forces, the friction term dissipates kinetic energy, and the stochastic term restores energy from the thermal bath. Together, these ensure that the equilibrium distribution of sampled configurations $\vr$ is given by:
\begin{equation}
    \pi_r(\vr)
    \propto \exp \left[-u(\vr) / (k_\mathrm{B} T) \right].
\end{equation}
In practice Equation~\ref{eq:app-langevin-dynamics} is discretized with a finite timestep $\Delta t$.

\textit{Simulation:} 
To obtain training data, we simulate the mini-protein chignolin using Langevin molecular dynamics in OpenMM with the AMBER14 all-atom force field and the TIP3P water model. The chignolin peptide (PDB ID: 1UAO) is solvated in a cubic water box with $1 \unit{\nano\meter}$ padding and neutralized. Simulations are performed at $360 \unit{\kelvin}$ using Langevin dynamics ($1 \unit{\per\pico\second}$ friction, $2 \unit{\femto\second}$ timestep) with PME electrostatics and a $1 \unit{\nano\meter}$ cutoff. After energy minimization and velocity initialization, we simulate the system for $1 \unit{\micro\second}$ and save coordinates every $2 \unit{\pico\second}$.

\textit{Data preparation:}
After simulation, we apply preprocessing steps to the raw data. These include removing the water solvent and hydrogen atoms, retaining only the heavy atoms of the protein. We then center the coordinates of each individual configuration and superpose the configurations along the simulated trajectory.

\subsubsection{Model Parameterization} \label{app:chignolin-model-parameterization}
\textit{Network architecture:} To parameterize the flow, we use the $E(3)$-equivariant graph neural network (GNN) architecture proposed by \cite{Satorras2021}. Equivariance is achieved using equivariant graph convolutional layers (EGCL). Given coordinates $\vx_i^{(l)} \in \mathbb{R}^3$ and node embeddings $\vh_i^{(l)} \in \mathbb{R}^{d_H}$ for each node $i$, the output node features and coordinates of the $l$-th EGCL layer are computed as follows:
\begin{align}
    \vm_{ij} &= \varphi_e \left(\vh_i^{(l)}, \vh_j^{(l)}, \gamma \left(\|\vx_i^{(l)} - \vx_j^{(l)}\|^2\right), a_{ij} \right),\\
    \vx_i^{(l+1)} &= \vx_i^{(l)} + \frac{1}{M_i-1} \sum_{j \neq i} \left (\vx_i^{(l)} - \vx_j^{(l)} \right )\, \varphi_x(\vm_{ij}),\\
    \vm_i &= \sum_{j \neq i} \vm_{ij},\\
    \vh_i^{(l+1)} &= \varphi_h\left(\vh_i^{(l)}, \vm_i^{(l)}\right).
\end{align}
Here, $\varphi$ denotes functions parameterized by multi-layer perceptrons (MLPs), and $\gamma$ represents a $d_F$-dimensional Fourier feature encoding function of the distance between two node coordinates. Furthermore, $a_{ij}$ denotes information associated with the edge between nodes $i$ and $j$, and $M_i$ denotes the number of nodes in the one-hop neighborhood of node $i$. The full network consists of $L$ such layers. As initial hidden node embeddings $\vh_i^{(0)}$, we use a concatenation of linear embeddings of the particle’s atom type $a \in [0, 1]^{N_A}$, associated bead type $b \in [0, 1]^{N_B}$, and the interpolation time $t \in [0, 1]$. Moreover, we do not include additional edge information $a_{ij}$.

\textit{Noise distribution:} For the projected-out atoms, we define a residue-wise target latent distribution. The latent position $\boldsymbol{\epsilon}_{I, i}$ of the $i$-th atom in the $I$-th residue is sampled from a Gaussian distribution centered at the position $\mR_I$ of the corresponding $C_\alpha$ atom:
\begin{equation}
\boldsymbol{\epsilon}_{I,i} \sim \mathcal{N}(\mR_I,\sigma^2 \mathbf{1}),
\end{equation}
with variance $\sigma^2$. 

We report the hyperparameter choices for the model's architecture and for the noise distribution in Table~\ref{tab:app-chignolin-benchmark-architecture}.

\begin{table}[ht]
\caption{Architectural hyperparameters of the model trained for backmapping the $C_\alpha$ representation of chignolin. We report the choices for the $E(3)$-equivariant GNN parameterization of the velocity field and the noise distribution.}  
\label{tab:app-chignolin-benchmark-architecture}
\centering
\begin{tabular}{lc}
\textbf{Hyperparameter} & \textbf{Value} \\
\hline \\
Number of layers $L$ & $6$ \\
Number of Fourier features $d_F$ & $6$ \\
Hidden dimensionality $d_H$ & $65$ \\
Latent variance $\sigma^2$ & $0.04$ \\
\end{tabular}
\end{table}

\subsubsection{Training Details}
\textit{Split-flows:} We train our model on the conditional flow matching objective presented in Section~\ref{sec:two-sided-fm}, using the coupling described in Section~\ref{sec:split-flows} and a linear reference interpolant:
\begin{equation}
I: [0, 1] \times \mathbb{R}^n \times \mathbb{R}^n \rightarrow \mathbb{R}^n, \qquad t, \vx_0, \vx_1 \mapsto I_t(\vx_0, \vx_1) = (1 - t)\vx_0 + t\vx_1.
\end{equation}
All training hyperparameters are listed in Table~\ref{tab:app-chignolin-benchmark-training}. Training takes approximately 40 hours on an NVIDIA A30 GPU with 30 GB of memory. 

\begin{table}[ht]
\caption{Training hyperparameters of the model trained for backmapping the $C_\alpha$ representation of chignolin. }  
\label{tab:app-chignolin-benchmark-training}
\centering
\begin{tabular}{lc}
\textbf{Hyperparameter} & \textbf{Value} \\
\hline \\
Optimizer & Adam \\
Learning rate (LR) & $3 \times 10^{-4}$ \\
LR scheduler & One-cycle\\
Weight decay & $1 \times 10^{-3}$\\
Batch size & 64 \\
Number of opt. steps & 24,760\\
\end{tabular}
\end{table}

\textit{TC-VAE:} We retrain the TC-VAE \cite{Shmilovich2022} using the code and hyperparameters provided by the authors. We find that the additional energy regularization proposed by the authors consistently leads to numerical instabilities, resulting in NaN values in the loss. We therefore train the model without the energy regularization. Training takes approximately 120 hours on an NVIDIA A30 GPU with 30 GB of memory.

\textit{CG-back:} Since CG-back \cite{UgarteLaTorre2025} is a transferable model, we utilize the pretrained model (M), provided by the authors.

\textit{Flow-back:} Flow-back \cite{Berlaga2025} is a transferable model. We hence use the pretrained model (Pro-pretrained) provided by the authors.

\subsubsection{Evaluation Metrics}
We evaluate and compare the backmapping capabilities of split-flows and reference methods using several metrics. In this section, we provide additional details on their computation. We will denote the reference configurations as $\vr$ and $\mR = M(\vr)$ and the reconstructed configurations as $\hat{\vr}$ and $\hat{\mR} = M(\hat{\vr})$ for the fine- and coarse-grained resolutions, respectively.

\textit{Wasserstein-1 distance of the internal energy distribution:} 
The internal potential energy of each configuration is computed using the AMBER14 all-atom force field under vacuum conditions. For each configuration $\vr$, we add hydrogen atoms using OpenMM and relax the positions via energy minimization (up to 1000 iterations or until the forces fall below $0.1~\unit{\kilo\joule\per\mol\per\nano\meter}$). We then compute the Wasserstein-1 distance between the distributions of internal energies of the reference and reconstructed configurations. In Figure~\ref{fig:app-energy-distribution}, we show histograms of the energy distributions.
\begin{figure}[ht]
    \centering
    \includegraphics[width=0.5\linewidth]{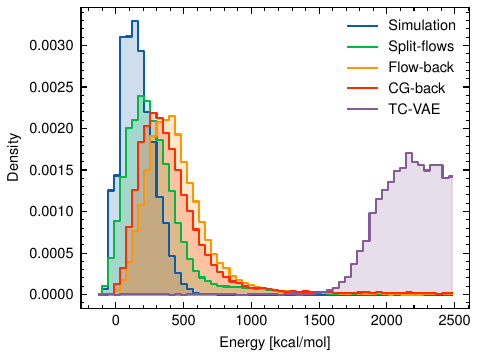}
    \caption{Distributions of internal energies of configurations from the reference simulation and backmapped configurations obtained from the methods in the comparison. The internal energy is evaluated using the AMBER14 all-atom force field.}
    \label{fig:app-energy-distribution}
\end{figure}

\textit{Coarse-grained RMSD:} 
To quantify the consistency between coarse-grained configurations and their corresponding coarse-grained representatives, we compute the root-mean-squared deviation (RMSD) between the $C_\alpha$ atoms of the coarse-grained and backmapped configurations. The RMSD is well-defined up to global rotations and translations, and can be expressed as:
\begin{equation} \label{eq:app-rmsd}
\mathrm{RMSD}_\mathrm{cg}(\mR, \hat{\mR}) = \min_{Q,\, \vt} \left[
\frac{1}{N} \sum_{i=1}^{N} 
\left\|Q\, \mR_i + \mathbf{t} - \hat{\mR}_i \right\|^2
\right]^{1/2},
\end{equation}
where $\mR_i$ and $\hat{\mR}_i$ denote the positions of the $i$-th coarse-grained particle for the reference and reconstructed configurations, respectively. Furthermore, $Q \in SO(3)$ is a global rotation matrix and $\vt \in \mathbb{R}^3$ a global translation vector.

\textit{Relative graph-edit distance:} 
We measure topological reconstruction quality using the relative graph edit distance between the molecular graph obtained from inter-atomic distances and the reference graph. Given a reconstructed configuration $\hat{\vr}$, we construct an adjacency matrix $\hat{A}$ based on the Van-der-Waals cutoff values $c$ in Table~\ref{tab:app-vdw-cutoff}, where the entry at position $ij$ is defined as:
\begin{equation}
    \hat{A}_{ij} = \begin{cases}
    1 & \text{if } \|\hat{\vr}_i - \hat{\vr}_j\|_2 < s (c_i + c_j), \\
    0 & \text{otherwise}
    \end{cases},
\end{equation}
where $c_i$ and $c_j$ are the respective cutoff values, and $s = 1.3$ is a scaling factor. We then compute the relative graph edit distance as:
\begin{equation}
D_\mathcal{G} = \frac{\sum_{ij} ( A - \hat{A} )_{ij}}{\sum_{ij} A_{ij}},
\end{equation}
where $A$ is the adjacency matrix of the reference molecular graph.
\begin{table}[ht]
\centering
\caption{VDW cutoff values in $\unit{\nano\meter}$ for atoms with atomic numbers 1--107.} \label{tab:app-vdw-cutoff}
\resizebox{1\linewidth}{!}{
\begin{tabular}{lclclclclclclclc}
\textbf{Z} & \textbf{Cutoff} & \textbf{Z} & \textbf{Cutoff} & \textbf{Z} & \textbf{Cutoff} & \textbf{Z} & \textbf{Cutoff} & \textbf{Z} & \textbf{Cutoff} & \textbf{Z} & \textbf{Cutoff} & \textbf{Z} & \textbf{Cutoff} & \textbf{Z} & \textbf{Cutoff} \\
\hline \\
1 & 0.023 & 2 & 0.093 & 3 & 0.068 & 4 & 0.035 & 5 & 0.083 & 6 & 0.068 & 7 & 0.068 & 8 & 0.068 \\
9 & 0.064 & 10 & 0.112 & 11 & 0.097 & 12 & 0.110 & 13 & 0.135 & 14 & 0.120 & 15 & 0.075 & 16 & 0.102 \\
17 & 0.099 & 18 & 0.157 & 19 & 0.133 & 20 & 0.099 & 21 & 0.144 & 22 & 0.147 & 23 & 0.133 & 24 & 0.135 \\
25 & 0.135 & 26 & 0.134 & 27 & 0.133 & 28 & 0.150 & 29 & 0.152 & 30 & 0.145 & 31 & 0.122 & 32 & 0.117 \\
33 & 0.121 & 34 & 0.122 & 35 & 0.121 & 36 & 0.191 & 37 & 0.147 & 38 & 0.112 & 39 & 0.178 & 40 & 0.156 \\
41 & 0.148 & 42 & 0.147 & 43 & 0.135 & 44 & 0.140 & 45 & 0.145 & 46 & 0.150 & 47 & 0.159 & 48 & 0.169 \\
49 & 0.163 & 50 & 0.146 & 51 & 0.146 & 52 & 0.147 & 53 & 0.140 & 54 & 0.198 & 55 & 0.167 & 56 & 0.134 \\
57 & 0.187 & 58 & 0.183 & 59 & 0.182 & 60 & 0.181 & 61 & 0.180 & 62 & 0.180 & 63 & 0.199 & 64 & 0.179 \\
65 & 0.176 & 66 & 0.175 & 67 & 0.174 & 68 & 0.173 & 69 & 0.172 & 70 & 0.194 & 71 & 0.172 & 72 & 0.157 \\
73 & 0.143 & 74 & 0.137 & 75 & 0.135 & 76 & 0.137 & 77 & 0.132 & 78 & 0.150 & 79 & 0.150 & 80 & 0.170 \\
81 & 0.155 & 82 & 0.154 & 83 & 0.154 & 84 & 0.168 & 85 & 0.170 & 86 & 0.240 & 87 & 0.200 & 88 & 0.190 \\
89 & 0.188 & 90 & 0.179 & 91 & 0.161 & 92 & 0.158 & 93 & 0.155 & 94 & 0.153 & 95 & 0.151 & 96 & 0.150 \\
97 & 0.150 & 98 & 0.150 & 99 & 0.150 & 100 & 0.150 & 101 & 0.150 & 102 & 0.150 & 103 & 0.150 & 104 & 0.157 \\
105 & 0.149 & 106 & 0.143 & 107 & 0.141 &  &  &  &  &  &  &  &  &  &  \\
\end{tabular}
}
\end{table}

\textit{Fiber diversity:} To measure the diversity of generated structures for a given coarse-grained representative, we draw $M$ samples on the fiber $\mathcal{F}(\mR)$ for a given coarse-grained representative $\mR = M(\vr)$. Following \cite{Jones2023}, we then define a diversity score $\eta_\mathrm{div}$ as the ratio of the average pairwise RMSD (see Equation~\ref{eq:app-rmsd}) between all generated configurations and the average RMSD between each generated structure and the reference configuration $\vr$:
\begin{equation}
    \eta_\mathrm{div} = \frac{\tfrac{2}{M(M-1)}\sum_{m \neq k} \mathrm{RMSD}(\vr_m, \vr_k)}{\tfrac{1}{M} \sum_{m} \mathrm{RMSD}(\vr_m, \vr)},
\end{equation}
where here $\vr_m$ and $\vr_k$ denote the $m$-th and $k$-th sample on the fiber.

\subsection{Solute in a Lipid Bilayer} \label{app:bilayer}
\subsubsection{Data Generation}
\textit{Simulation:} 
The simulated data is due to \cite{Nagel2025}, who simulate a coarse-grained POPC lipid bilayer interacting with a two-bead C1P3 solute using the Martini 3 force field. Simulations are performed in GROMACS 2024.3 with a time step of $0.02 \unit{\pico\second}$ and a simulation box of $6 \times 6 \times 10 \unit{\nano\meter^3}$ under periodic boundary conditions. The systems are first equilibrated for $200 \unit{\pico\second}$ in an NPT ensemble at $298 \unit{\kelvin}$ and $1 \unit{\bar}$. Subsequently, the system is simulated for $1 \unit{\micro\second}$ in an NVT ensemble with a constant biasing force of $10 \unit{\kilo\cal\per\mol\per\nano\meter}$ dragging the solute through the simulation box. Frames are saved every $0.2 \unit{\pico\second}$.

\textit{Data preparation:}
The simulated configurations are translated such that the membrane center is at the origin. We then extract the positions of the C1 and P3 beads from the simulated trajectory to compute a two-dimensional description consisting of the distance $z$ between the center of mass of the solute and the membrane center, and the relative orientation $\vartheta$ with respect to the $z$-axis.

\subsubsection{Model Parameterization}
\textit{Network architecture:} We parameterize the velocity field $\vv_t^\theta$ of the split-flow using a simple MLP. To account for the periodicity in the distance from the membrane center $z \in \left[-\tfrac{L}{2}, \tfrac{L}{2}\right]$, we apply a sine-cosine input parameterization to the MLP:
\begin{equation}
z \mapsto \begin{bmatrix} \sin\left( \tfrac{2\pi}{L} z \right) & \cos\left( \tfrac{2\pi}{L} z \right) \end{bmatrix}^T.
\end{equation}

\textit{Noise distribution:} As the target noise distribution $\pi_{\epsilon|R}$, we use a uniform distribution $\mathcal{U}([0, \pi])$ over the angular domain. With this choice, the flow directly provides access to the excess quantities, i.e., the excess local mapping entropy and the excess information loss.

We give our architectural hyperparameter choices in Table~\ref{tab:app-bilayer-architecture}.

\begin{table}[ht]
\caption{Architectural hyperparameters of the model trained for backmapping the reduced representation of a solute dragged through a lipid bilayer. We report the choices for the MLP parameterization of the velocity field.}  
\label{tab:app-bilayer-architecture}
\centering
\begin{tabular}{lc}
\textbf{Hyperparameter} & \textbf{Value} \\
\hline \\
Number of layers $L$ & $3$ \\
Hidden dimensionality $d_H$ & 32 \\
Activation function & ReLU \\
\end{tabular}
\end{table}

\subsubsection{Training Details}
Training takes approximately 40 minutes on an NVIDIA GeForce RTX 4060 with 8 GB of memory. We report all hyperparameter choices for training the model in Table~\ref{tab:app-bilayer-training}

\begin{table}[ht]
\caption{Training hyperparameters of the model trained for backmapping the reduced representation of a solute dragged through a lipid bilayer.}  
\label{tab:app-bilayer-training}
\centering
\begin{tabular}{lc}
\textbf{Hyperparameter} & \textbf{Value} \\
\hline \\
Optimizer & Adam \\
Learning rate (LR) & $1 \times 10^{-3}$ \\
LR scheduler & – \\
Weight decay & $0$\\
Batch size & 2048 \\
Number of opt. steps & 195,300\\
\end{tabular}
\end{table}

\subsubsection{KDE Comparison Details}
As a baseline, we fit a kernel density estimator (KDE) to the training data samples $\vr$ and $\mR$ to obtain a estimates $\pi_R^\mathrm{KDE}$ and $\pi_r^\mathrm{KDE}$ for the fine- and coarse-grained densities, respectively. We then bin the test data samples according to $\mR = \begin{bmatrix} z
\end{bmatrix}^\top$ to obtain an estimate of the local mapping entropy in Equation~\ref{eq:conditional-mapping-entropy}. In Figure~\ref{fig:lipid-bilayer} (B), we show the resulting excess information loss landscape across the simulation box. Furthermore, in Figure~\ref{fig:lipid-membrane-correlations}, we present a correlation plot between the split-flow and KDE estimates of the mapping entropy, which shows great agreement of the two methods, with a mean absolute error of $0.027$ and a Pearson correlation coefficient of $0.99$.
\begin{figure}[ht]
    \centering
    \includegraphics[width=0.5\linewidth]{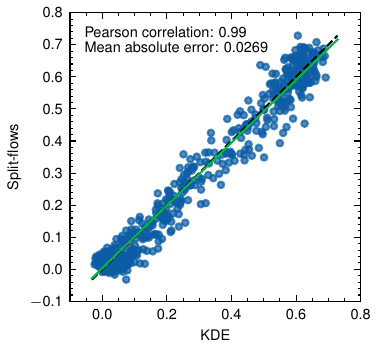}
    \caption{Correlation plot of split-flow and KDE estimates of the local information loss across the lipid membrane. The black dashed line denotes perfect agreement and nearly coincides with a linear fit to the data, shown in green.}
    \label{fig:lipid-membrane-correlations}
\end{figure}

\subsection{Alanine Dipeptide} \label{app:ala2}
\subsubsection{Data Generation}
\textit{Simulation:}
To obtain training data, we simulate alanine dipeptide using Langevin molecular dynamics (see Appendix~\ref{app:chignolin-data-generation}) in OpenMM with the AMBER14 all-atom force field and the TIP3P water model. The alanine dipeptide molecule is solvated in a cubic water box with $1 \unit{\nano\meter}$ padding and neutralized. Simulations are performed at $600 \unit{\kelvin}$ using Langevin dynamics ($1 \unit{\per\pico\second}$ friction, $2 \unit{\femto\second}$ timestep) with PME electrostatics and a $1 \unit{\nano\meter}$ cutoff. After energy minimization and velocity initialization, we simulate the system for $1 \unit{\micro\second}$ and save coordinates every $2 \unit{\pico\second}$.

\textit{Data preparation:}
To preprocess the raw simulation data, we remove the water solvent, retaining only the atoms of alanine dipeptide. We then center the coordinates of each individual configuration and superpose the configurations along the simulated trajectory.

\subsubsection{Rigidity of the Coarse-Grained Representation}
The visualization of the local mapping entropy in the $(\phi, \psi)$ plane of Ramachandran angles relies on the rigidity of the bonds and angles between the five atoms in the coarse-grained representation. In Table~\ref{tab:app-bond-angle-deviations}, we report the mean relative deviations of bond lengths and bond angles. We observe only small bond and angular fluctuations up to approximately $3\%$, indicating that these degrees of freedom contribute negligibly to the configurational entropy.

\begin{table}[ht]
\caption{Relative deviation of the bond lengths and bond angles to the their respective mean values. We report mean and standard deviation evaluated over trajectory used for training.}
\label{tab:app-bond-angle-deviations}
\centering
\begin{tabular}{lc}
\textbf{Bond / Angle} & \textbf{Relative deviation [$\%$]} \\
\hline \\
Bond C--N & \valpm{2.07}{1.56} \\
Bond N--CA & \valpm{2.20}{1.64} \\
Bond CA--C & \valpm{2.14}{1.64} \\
Bond C--N & \valpm{2.01}{1.57} \\
Angle C--N--CA & \valpm{2.90}{2.20} \\
Angle N--CA--C & \valpm{3.20}{2.43} \\
Angle C--C--N & \valpm{2.73}{2.07} \\
\end{tabular}
\end{table}

\subsubsection{Model Parameterization}
We parameterize the model analogously to the model trained on chignolin, as presented in Appendix~\ref{app:chignolin-model-parameterization}. We give our hyperparameter choices in Table~\ref{tab:app-ala2-benchmark-architecture}.

\begin{table}[ht]
\caption{Architectural hyperparameters of the model trained for backmapping the coarse-grained representation of alanine dipeptide. We report the choices for the $E(3)$-equivariant GNN parameterization of the velocity field and the noise distribution.}  
\label{tab:app-ala2-benchmark-architecture}
\centering
\begin{tabular}{lc}
\textbf{Hyperparameter} & \textbf{Value} \\
\hline \\
Number of layers $L$ & $4$ \\
Number of Fourier features $d_F$ & $6$ \\
Hidden dimensionality $d_H$ & $129$ \\
Latent variance $\sigma^2$ & $0.04$ \\
\end{tabular}
\end{table}

\subsubsection{Training Details}
Training takes about 18 hours on an NVIDIA A30 GPU with 30 GB of memory. We report all hyperparameter choices for training in Table~\ref{tab:app-ala2-training}

\begin{table}[ht]
\caption{Training hyperparameters of the model trained for backmapping the coarse-grained representation of alanine dipeptide.}  
\label{tab:app-ala2-training}
\centering
\begin{tabular}{lc}
\textbf{Hyperparameter} & \textbf{Value} \\
\hline \\
Optimizer & Adam \\
Learning rate (LR) & $1 \times 10^{-4}$ \\
LR scheduler & Exponential ($\gamma=0.999$) \\
Weight decay & $0$\\
Batch size & 64 \\
Number of opt. steps & 50,500\\
\end{tabular}
\end{table}

\vfill

\end{document}